\def\NAT@def@citea{\def\@citea{\NAT@separator}}
\theoremstyle{plain}
\newtheorem{theorem}{Theorem}[section]
\newtheorem{corollary}[theorem]{Corollary}
\theoremstyle{definition}
\newtheorem{definition}[theorem]{Definition}
\newtheorem{example}[theorem]{Example}
\theoremstyle{remark}
\newtheorem{remark}{Remark}
\DeclareMathOperator{\sign}{sign}
\DeclarePairedDelimiter\floor{\lfloor}{\rfloor}
\begin{document}


\title{Indirect inference for locally stationary ARMA processes with stable innovations}

\author{
\name{Shu Wei Chou-Chen\thanks{CONTACT Shu Wei Chou-Chen Email: shuchou@alumni.usp.br} and Pedro A. Morettin}
\affil{Institute of Mathematics and Statistics, University of S\~ao Paulo, Brazil}
}

\title{Indirect inference for locally stationary ARMA processes with stable innovations}


\maketitle

\bigskip

\begin{abstract}
The class of locally stationary processes assumes that there is a time-varying spectral representation, that is, the existence of finite second moment. We propose the $\alpha$-stable locally stationary process by modifying the innovations into stable distributions and the indirect inference to estimate this type of model. Due to the infinite variance, some of interesting properties such as time-varying autocorrelation cannot be defined. However, since the $\alpha$-stable family of distributions is closed under linear combination which includes the possibility of handling asymmetry and thicker tails, the proposed model has the same tail behavior throughout the time. In this paper, we propose this new model, present theoretical properties of the process and carry out simulations related to the indirect inference in order to estimate the parametric form of the model. Finally, an empirical application is illustrated.
\end{abstract}

\begin{keywords}
Locally stationary process; stable distribution; indirect inference
\end{keywords}

\section{Introduction}

\label{sec:intro}

The class of locally stationary processes describes processes that are approximately stationary in a neighborhood of each time point but its structure, such as covariances and parameters, gradually changes throughout the time period \cite{Dahlhaus1996a,Dahlhaus1997}. This type of processes has been proved to achieve meaningful asymptotic theory by applying infill asymptotics. The idea of this approach is that the time-varying parameters are rescaled to the unit interval, and thus, more available observations imply obtaining more contribution for each local structure. Consequently, statistical asymptotic results such as consistency, asymptotic normality, efficiency, locally asymptotically normal expansions, etc. are obtained. \cite{Dahlhaus2012} provided a review of this type of process. 

Most results of locally stationary processes assume innovations with finite second moment. However, different areas have observed phenomena with heavy tail distributions or infinite variance. In this work, we consider that the innovations of the locally stationary process follow $\alpha-$stable distributions. The advantage of assuming $\alpha-$stable distributions is its flexibility for asymmetry and thick tails. Also, it is closed under linear combinations and includes the Gaussian distribution as a special case. However, its estimation is difficult since the density function does not have a closed-form. Consequently, the usual estimation methods such as maximum likelihood and method of moments do not work. 

Alternative estimation approaches such as methods based on quantiles \citep{McCulloch1986} or on the empirical characteristic function \citep{Koutrouvelis1981} are proposed. However, those methods are only useful for the estimation of the $\alpha-$stable distributions parameters and, therefore, they are difficult to apply for more complex models. 

The strategy to estimate this kind of process is the indirect inference proposed by \cite{Gourieroux1993} and \cite{Gallant1996}. Since $\alpha-$stable distributions can be easily simulated, the indirect approach, which is an intensive computationally simulation based method, can be a solution to overcome the estimation problem.

Models involving stable distribution were successfully implemented in indirect inference for independent samples from the $\alpha$-stable distributions and $\alpha-$stable ARMA processes \cite{Lombardi2008}. Moreover, some time series models involving stable distributions are also successfully implemented using indirect inference \citep{Sampaio2015,Sampaio2019,Calzolari2014,Calzolari2018}.

Our contribution in this work is twofold. First, we propose the locally stationary processes with stable innovations and present the theoretical properties of this model. We also justify the reason why we call them $\alpha-$stable locally stationary processes. Second, we propose the indirect inference in order to estimate the models with linear time-varying coefficient.

The paper is organized as follows. In Section \ref{sec:background}, we review the basic background on locally stationary processes, $\alpha-$stable distribution and indirect inference. Then, properties of the $\alpha-$stable locally stationary processes are presented in Section \ref{sec:stable_tvARMA}. Section \ref{sec:indirect_inference_stable_tvARMA} describes the indirect inference for this kind of processes. Simulations are performed to study the indirect inference approach in Section \ref{sec:simulation}. A wind data application is illustrated in Section \ref{sec:application}. Finally, conclusions are presented in Section \ref{sec:conclusion}.

\section{Background}
\label{sec:background}
\subsection{Locally stationary processes}

Locally stationary processes were introduced by using a time-varying spectral representation \cite{Dahlhaus1997}. However, we use the time-domain version as in \cite{Dahlhaus2009} since stable distributions do not have finite second moment.

\begin{definition} 
	\label{def:LLSP1}
	The sequence of stochastic processes $X_{t,T}$  $(t=1,...,T)$ is a linear locally stationary processes if $X_{t,T}$ has a representation
	\begin{equation}
	\label{LLSP1}
	X_{t,T}= \sum_{j=-\infty}^{\infty} a_{t,T}(j) \varepsilon_{t-j},
	\end{equation}	
	where the following conditions are satisfied:
	
	\begin{enumerate}
		\item [(i)]
		\begin{equation}
		\label{assumption1_i}
		\sup_{t} \left| a_{t,T} (j) \right| \leq \frac{K}{\ell (j)},~~~\text{with K independent of T;}
		\end{equation}
		
		\item [(ii)]
		\label{assumption1_ii}Let $V(g)$ be the total variation of a function $g$ on $\left[0,1 \right] $; then, there exist funcions $\alpha(\cdot,j): \left(0,1\right] \rightarrow \mathbb{R}$ with
		
		\begin{equation}
		\label{assumption1_iia}
		\sup_{u} \left| a(u,j) \right| \leq \frac{K}{\ell (j)},
		\end{equation}				
		\begin{equation}
		\label{assumption1_iib}
		\sup_{j} \sum_{t=1}^{T} \left| a_{t,T}(j) - a\left( \frac{t}{T},j\right)  \right| \leq K,
		\end{equation}		
		\begin{equation}
		\label{assumption1_iic}
		V( a(\cdot,j)) \leq \frac{K}{\ell(j)};
		\end{equation}		
		\item [(iii)]
		\label{assumption1_iii} $\left\lbrace \varepsilon_t\right\rbrace $ are i.i.d. with $E\left[ \varepsilon_t\right] =0$, $E\left[ \varepsilon_s,\varepsilon_t \right]=0$ for $s \neq t$ and $E\left[ \varepsilon_t^2 \right]=1$.
	\end{enumerate}
\end{definition}

Note that the condition (iii) in the Definition \ref{def:LLSP1} assumes that the innovations $\left\lbrace \varepsilon_t \right\rbrace $ has zero mean and unit variance. In this case, there exists a spectral representation of the process $X_t$ and time-varying spectral density. Note that classical stationary processes arise as a special case when all parameter curves are constant. 

\subsection{$\alpha-$stable distribution}

Stable distributions, as an extension of Gaussian distributions, can be defined by its characteristic function:
\begin{equation}
\label{stable_ch.func.}
E \left\lbrace e^{i \theta X} \right\rbrace = \left \{ \begin{matrix} \exp \left\lbrace -\sigma^\alpha |\theta|^\alpha (1-i \beta (\sign \theta) \tan \frac{\pi \alpha}{2})+i\mu\theta \right\rbrace, & \text{if $\alpha \neq 1$,}
\\ \exp \left\lbrace -\sigma |\theta| (1+i \beta \frac{2}{\pi} (\sign \theta) \log |\theta|)+i\mu\theta \right\rbrace, & \text{if $\alpha = 1$,}\end{matrix}\right. 
\end{equation}
where $\alpha \in \left( 0,2\right]$ is the index of stability (tail heaviness), $\sigma > 0$ is the scale parameter, $-1 \leq \beta \leq 1$ the asymmetry parameter and $\mu \in \mathbb{R}$ the location parameter. It is denoted by $S_\alpha (\sigma,\beta,\mu)$. Important properties can be found in detail in \cite{Samorodnitsky1994}. 

This class of distribution generalizes some important known distributions: normal distribution for $\alpha=2$, Cauchy distribution ($\alpha=1,~\beta=0$) and L\'evy distribution ($\alpha=1/2,~\beta=\pm 1$). However, it does not have a closed-form density function in general. Moreover, the non-existence of moments greater than $\alpha$ makes it difficult to estimate parameters. Different estimation approaches have been proposed, such as methods based on quantiles \citep{McCulloch1986} and methods based on the empirical characteristic function \citep{Koutrouvelis1981}. Nevertheless, they are only useful for independent samples from stable distributions and difficult to perform for more complex models.

When a random variable has the density function and distribution function, its simulation is an easy task. In stable distribution case, \cite{Weron1995} proposed an algorithm to generate $\alpha$-stable distribution. To simulate a random variable $X \sim S_\alpha (1,\beta,0)$:
\begin{enumerate}
	\item generate a random variable $U$ uniformly distributed on $\left( -\frac{\pi}{2},\frac{\pi}{2} \right) $, and a independent exponential random variable $W$ with mean $1$, then
	\item let $B_{\alpha,\beta}=\frac{\arctan\left(\beta \tan \frac{\pi \alpha}{2}\right)}{\alpha}$ and $S_{\alpha, \beta}= \left[ 1+ \beta^2 \tan^2 \frac{\pi \alpha}{2} \right]^{1/(2\alpha)} $ , and compute
	\begin{equation}
	X=
	\left \{ \begin{matrix} S_{\alpha,\beta} \frac{\sin \left( \alpha (U+B_{\alpha,\beta}) \right) }{(\cos U)^{1/\alpha}} \left[ \frac{\cos(U-\alpha(U+B_{\alpha,\beta}))}{W} \right] ^{\frac{1-\alpha}{\alpha}},  & \text{if $\alpha \neq 1$,}
	\\ \frac{2}{\pi} \left[ \left( \frac{\pi}{2}+\beta U \right) \tan U - \beta \log \left( \frac{\frac{\pi}{2}W \cos U}{\frac{\pi}{2}+\beta U} \right)   \right],  &\text{if $\alpha = 1$.}
	\end{matrix}\right.
	\end{equation}
\end{enumerate}
Next, $Y \sim S_\alpha (\sigma,\beta,\mu)$ is obtained by means of the standardization formula:
\begin{equation}
Y=\left \{ 
\begin{array}{ll}
\sigma X +\mu,  & \text{if $\alpha \neq 1$,}
\\ \sigma X+ \frac{2}{\pi} \beta \sigma \log \sigma +\mu, ~~ &\text{if $\alpha = 1$.}
\end{array}\right.
\end{equation}

Since stable distributions can be easily simulated, the indirect approaches proposed by \cite{Gourieroux1993} and \cite{Gallant1996} could be the solution to more complex models involving stable distributions.

\subsection{Indirect inference}

The indirect inference proposed by \cite{Gourieroux1993} is based on a very simple idea and it is suitable for situations where the direct estimation of the model is difficult. Let $\boldsymbol{y}$ be a sample of $T$ observations from a model of interest (IM) and $\hat{\theta}$ be the maximum likelihood estimator (MLE) of $\theta \in \Theta$ in IM which is unavailable. Then, consider the auxiliary model (AM) depending on a parameter vector $\lambda \in \Lambda$ whose likelihood function is easier to handle, but its MLE $\hat{\lambda}$ is not necessarily consistent. The indirect inference is carried out by the following steps:

\begin{description}
	\item [Step 1] Compute $\hat{\lambda}$ based on $\boldsymbol{y}$.
	\item [Step 2] Simulate a set of $S$ vectors of size $T$ from the IM on the basis of an arbitrary parameter vector $\hat{\theta}^{(0)}$. Let us denote each of those vectors as $y^s(\hat{\theta}^{(0)}),~s=1,...,S$.
	\item [Step 3] Then, estimate parameters of the AM using simulated values from the IM,
	\begin{equation}
	\hat{\lambda}_S(\hat{\theta}^{(0)}) = \operatorname*{argmax}_{\lambda \in Z} \sum_{s=1}^{S} \sum_{t=1}^{T} \ln \tilde{\mathcal{L}} \left[ \lambda;y^s(\hat{\theta}^{(0)})\right].
	\end{equation}
	\item [Step 4] Numerically update the initial guess $\hat{\theta}^{(0)}$ in order to minimize the distance
	\begin{equation}
	\left[ \hat{\lambda}-\hat{\lambda}_S \right]' \Omega \left[ \hat{\lambda}-\hat{\lambda}_S \right],
	\end{equation}
	where $\Omega$ is a symmetric nonnegative matrix defining the metric.	
\end{description}

For choosing $\Omega$, \cite{Gourieroux1993} proved that when the parameter vectors of both AM and IM have the same dimension and $T$ is sufficiently large, the estimator does not depend on the  matrix $\Omega$. In this paper, we consider $\Omega$ as identity matrix. Finally, the estimation step is performed with a numerical algorithm, such as Newton-Raphson. Then, for a given estimate $\hat{\theta}^{(p)}$, the procedure yields $\hat{\theta}^{(p+1)}$ and the process will be repeated until the series of $\hat{\theta}^{(p)}$ converges. The estimator is then given by
\begin{equation}
	\hat{\theta} = \lim\limits_{p \rightarrow \infty} \hat{\theta}^{(p)}.
\end{equation}

\section{tvARMA with stable innovations}
\label{sec:stable_tvARMA}
An important example of the locally stationary process is the time-varying ARMA model, briefly tvARMA. In this section, we will consider this model with stable innovations.

\begin{definition}[tvARMA with stable innovations]
	\label{def:stable_tvARMA}
	Consider the system of difference equations
	\begin{equation}
	\label{eq:tvARMA_stable}
	\sum_{j=0}^{p} \alpha_j\left( \frac{t}{T} \right)  X_{t-j,T}=\sum_{k=0}^{q} \beta_k\left( \frac{t}{T} \right)\gamma\left( \frac{t-k}{T} \right)  \varepsilon_{t-k},
	\end{equation}	
	where $\varepsilon_t$ are i.i.d. and $\varepsilon_t \sim S_\alpha (\nicefrac{1}{\sqrt{2}},\beta,0)$ with $\alpha \in (0,2)$. Assume $\alpha_0(u) \equiv \beta_0(u) \equiv 1$ and $\alpha_j(u)=\alpha_j(0)$, $\beta_k(u)=\beta_k(0)$ for $u<0$. Suppose also that all $\alpha_j(\cdot)$ and $\beta_k(\cdot)$, as well as $\gamma^2(\cdot)$, are of bounded variation. 
\end{definition}

The reason that the scale parameter of the innovations is set to be $\sigma = \nicefrac{1}{\sqrt{2}}$ is when $\alpha=2$, the standardized Gaussian innovation is obtained. It is possible to define the equation \eqref{eq:tvARMA_stable} as:
\begin{equation}
\label{eq:tvARMA_stable_matrix}
\Phi_{t,T}(B)X_{t,T}=\Theta_{t,T}(B) z_{t,T},
\end{equation}
where $z_{t,T}=\gamma (\frac{t}{T}) \varepsilon_{t}$; $\Phi_{t,T}(B)=1+\alpha_1( \frac{t}{T} )B +\cdots +\alpha_p( \frac{t}{T})B^p $ and $\Theta_{t,T}(B)=1+\beta_1( \frac{t}{T} )B +\cdots +\beta_q( \frac{t}{T})B^q $ are the autoregressive (AR) and moving average (MA) operators, respectively. 

There are several works related to stable linear processes. For instance, chapter 7 in \cite{Embrechts1997} and Chapter 13 in \cite{Brockwell1991} give a general review of stable linear processes. \cite{Kokoszka1994} study the infinite variance stable ARMA procseses and \cite{Kokoszka1995} study fractional ARIMA with stable innovations. \cite{Mikosch1995} proposed a Whittle-type estimator to estimate the coefficients of the ARMA model. In the stable innovation and time-varying coefficient context, \cite{Peiris2001a,Peiris2001b} considered the univariate and multivariate case of the system \eqref{eq:tvARMA_stable_matrix} with symmetric stable innovations and assume $\gamma(\cdot)=1$. However, they considered time-dependent coefficient without the local stationarity condition.

\subsection{Existence and Uniqueness of a Solution}

Before we study the local stationarity conditions on the time-varying coefficients, we present a set of regularity conditions of existence and uniqueness of solution of the system based on the concepts defined by \cite{Peiris2001a,Peiris2001b}.
\begin{definition}
	\label{def:ARMAregular}
	\begin{enumerate}
		\item The process \eqref{eq:tvARMA_stable_matrix} is AR regular (or causal) if there exist $a_{t,T}(j)$ such that
		\begin{equation}
		\label{eq:stable_ma_inf}
		X_{t,T}= \sum_{j=0}^{\infty} a_{t,T}(j) \varepsilon_{t-j}.
		\end{equation}
		satisfying $\sum\limits_{j=0}^{\infty} |a_{t,T}(j)|^{\delta}<\infty$ for all $t$ and $\delta=min\{1,\alpha\}$.
		\item The process \eqref{eq:tvARMA_stable_matrix} is MA regular (or invertible) if there exist $b_{t,T}(j)$ such that
		\begin{equation}
		\label{eq:stable_ar_inf}
		\varepsilon_{t}= \sum_{j=0}^{\infty} b_{t,T}(j) X_{t-j,T}.
		\end{equation}
		satisfying $\sum\limits_{j=0}^{\infty} |b_{t,T}(j)|^{\delta}<\infty$ for all $t$ and $\delta=min\{1,\alpha\}$.
	\end{enumerate}
\end{definition} 

The random series in \eqref{eq:stable_ma_inf} converges a.s. if and only if ${\sum\limits_{j=0}^{\infty} \left|  a_{t,T}(j)\right| ^\alpha <\infty}$, and by applying the Proposition 13.3.1 in \cite{Brockwell1991}, it converges absolutely if and only if $ \sum\limits_{j=0}^{\infty} \left|  a_{t,T}(j)\right| ^\delta <\infty$ with $\delta=\min \{ 1,\alpha \}$. Similar arguments are applied to the MA representation in \eqref{eq:stable_ar_inf}.

To continue, we omit the subscript $T$ from above notation. Consider the homogeneous difference equation
\begin{equation}
\label{eq:homo_dif_equation}
\Phi_{t}(B) u_t=0.
\end{equation}
If $\alpha_p(\frac{t}{T}) \neq 0$ for any $t$, there exist $p$ linearly independent solution $\psi_{1,t},\psi_{2,t},...,\psi_{p,t}$ such that
\begin{equation}
\label{eq:solution_homo_dif_equation}
\Psi(t)=
\begin{bmatrix} 
\psi_{1,t} & \cdots & \cdots &  \psi_{p,t} \\
\psi_{1,t-1} & \ddots & & \psi_{p,t-1} \\
\vdots &  &   \ddots & \vdots \\
\psi_{1,t-p+1} & \cdots & \cdots & \psi_{p,t-p+1}
\end{bmatrix}
\end{equation}
is invertible for any $t$ \cite{Miller1968}. Therefore, we can define 
\begin{equation}
\label{eq:green_matrix}
G(t,s)=\Psi(t)\left[ \Psi(s)\right]',
\end{equation}
the one-sided Green's function matrix associated with the AR operator $\Phi_{t}(B)$. It can be showed that $G(t,s)$ is unique and invariant under different solutions $\Psi(t)$ obtained from the homogeneous difference equation \eqref{eq:homo_dif_equation}. Furthermore, the one-sided Green's function associated with the AR operator $\Phi_{t}(B)$ is defined as the upper left-hand element in the matrix \eqref{eq:green_matrix},
\begin{equation}
\label{eq:green_function}
g(t,s)=\left[ G(t,s) \right]_{11} ,
\end{equation}
which is also unique and invariant. Now, we are ready to establish the conditions for AR regularity and MA regularity.

\begin{theorem}
	Let $\left\lbrace X_{t,T} \right\rbrace $ be a sequence of stochastic process that satisfies \eqref{eq:tvARMA_stable_matrix}. Suppose that $\alpha_p(\frac{t}{T}) \neq 0$ for all $t$, and $g(t,s)$, the one-sided Green's functions associated with $\Phi_{t}(B)$, is such that $\sum\limits_{s=-\infty}^{t} |g(t,s)|^\delta <\infty$, for all $t$. Assume also that $\sum\limits_{s=-0}^{q} |\beta_j(\cdot)|^2 < \infty$ for all $t$, and $\Phi_{t}(z)$ $(\Phi_{t}(z) \neq 0$ for $|z|\leq 1)$ and $\Theta_{t}(z)$ have no common roots. Then, there is a valid solution, given by
	\begin{equation}
	X_{t,T}= \sum_{j=0}^{\infty} a_{t,T}(j) \varepsilon_{t-j},
	\end{equation}
	to \eqref{eq:tvARMA_stable_matrix} with coefficients uniquely determined by 
	\begin{equation}
	a_{t,T}(j)=
	\left\{ \begin{array}{l r} 
	0, & j<0,\\
	\gamma(\frac{t-j}{T}), & j=0, \\
	\gamma(\frac{t-j}{T}) \sum\limits_{j=0}^k \beta_k (\frac{t-j+k}{T}) g(t,t-j+k), & 0\leq j \leq q, \\
	\gamma(\frac{t-j}{T}) \sum\limits_{j=0}^q \beta_k (\frac{t-j+k}{T}) g(t,t-j+k),  & j>q. 
	\end{array}	
	\right.
	\end{equation}
\end{theorem}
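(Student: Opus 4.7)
The plan is to build the solution explicitly via the one-sided Green's function and then verify that the resulting series converges in the required sense, with uniqueness obtained from the invertibility hypothesis on $\Phi_t(z)$. Concretely, I would first recall that $g(t,s)$ is (by construction from the fundamental matrix $\Psi(t)$) the right inverse of the time-varying AR operator $\Phi_t(B)$, so that for any forcing term $v_t$ the function $u_t = \sum_{s\le t} g(t,s)\, v_s$ formally satisfies $\Phi_t(B) u_t = v_t$. Applying this with $v_t = \Theta_t(B) z_t = \sum_{k=0}^{q} \beta_k(t/T)\gamma((t-k)/T)\varepsilon_{t-k}$ gives
\begin{equation*}
X_{t,T} \;=\; \sum_{s=-\infty}^{t} g(t,s) \sum_{k=0}^{q} \beta_k\!\left(\tfrac{s}{T}\right) \gamma\!\left(\tfrac{s-k}{T}\right) \varepsilon_{s-k}.
\end{equation*}
The next step is purely algebraic: reindex by $j = t-(s-k)$, so that $s = t-j+k$. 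Collecting all pairs $(s,k)$ with fixed $j$ produces the stated piecewise formula for $a_{t,T}(j)$, distinguishing $0\le j\le q$ (where $k$ runs from $0$ to $j$) from $j>q$ (where $k$ runs over the full range $0,\dots,q$), and giving $a_{t,T}(0)=\gamma((t)/T)$ from $g(t,t)=1$ (the $(1,1)$ entry of $\Psi(t)\Psi(t)^{-1}$).

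Next I would establish convergence and hence AR-regularity in the sense of Definition~\ref{def:ARMAregular}. Using the triangle inequality with $|x+y|^\delta \le |x|^\delta + |y|^\delta$ for $\delta=\min\{1,\alpha\}\le 1$, together with boundedness of $\gamma(\cdot)$ and $\beta_k(\cdot)$ (which follows from bounded variation on $[0,1]$) and the hypothesis $\sum_{s\le t}|g(t,s)|^\delta<\infty$, one bounds
\begin{equation*}
\sum_{j=0}^{\infty} |a_{t,T}(j)|^\delta \;\le\; C \sum_{k=0}^{q} |\beta_k|_\infty^\delta \sum_{s\le t} |g(t,s)|^\delta \;<\;\infty.
\end{equation*}
By Proposition 13.3.1 in \cite{Brockwell1991} applied in each $t$, the series $\sum_j a_{t,T}(j)\varepsilon_{t-j}$ then converges absolutely almost surely, so the candidate $X_{t,T}$ is well-defined, and by construction it solves \eqref{eq:tvARMA_stable_matrix}.

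For uniqueness, suppose $\tilde X_{t,T}$ is another solution. Then $Y_{t,T}=X_{t,T}-\tilde X_{t,T}$ satisfies the homogeneous equation $\Phi_t(B)Y_{t,T}=0$. The assumption $\Phi_t(z)\neq 0$ for $|z|\le 1$ combined with the no-common-roots condition with $\Theta_t(z)$ forces any causal bounded solution to vanish identically; the argument is the frozen-coefficient analogue of the stationary ARMA uniqueness argument, done locally in $t$ using the Green's function representation to propagate zero initial data forward. Thus $\tilde X_{t,T}=X_{t,T}$.

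The main obstacle I expect is the algebraic bookkeeping when passing from the double sum over $(s,k)$ to the single sum over $j$ in the coefficient formula, particularly handling the boundary case $0\le j\le q$ where the inner sum truncates, and keeping the time-varying arguments of $\beta_k$ and $\gamma$ correctly aligned. A second, more subtle issue is justifying that the Green's function really does invert $\Phi_t(B)$ for general (non-stationary) time-varying coefficients rather than only in the Miller \cite{Miller1968} setup for deterministic forcing; this requires interchanging a stochastic sum with the AR operator, which is permissible once absolute a.s. convergence of $\sum_j |a_{t,T}(j)|\,|\varepsilon_{t-j}|^\delta$ has been secured in the step above.
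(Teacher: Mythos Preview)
Your proposal is correct and follows essentially the same route as the paper, which simply sets $z_{t,T}=\gamma(t/T)\varepsilon_t$ and defers to \cite{Peiris2001b} for the Green's-function construction and the absolute-convergence check; you are supplying the details that the paper omits. One small sharpening: the uniqueness claim in the theorem is only that the \emph{coefficients} $a_{t,T}(j)$ are uniquely determined among AR-regular (causal) solutions, which follows directly by inserting $\sum_j a_{t,T}(j)\varepsilon_{t-j}$ into \eqref{eq:tvARMA_stable_matrix} and matching the coefficient of each $\varepsilon_{t-j}$, so you can avoid the ``bounded homogeneous solution vanishes'' detour, which is awkward here because the $X_{t,T}$ are a.s.\ unbounded stable variables.
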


\begin{proof}
	By setting $z_{t,T}=\gamma(\frac{t}{T})\varepsilon_{t}$, along with the absolute convergence conditions above, the proof is similar to \cite{Peiris2001b}. 
\end{proof}

\begin{theorem}
	Let $\left\lbrace X_{t,T} \right\rbrace $ be a sequence of stochastic process that satisfies \eqref{eq:tvARMA_stable_matrix}. Suppose that $\beta_q(\frac{t}{T}) \neq 0$ for all $t$, and $h(t,s)$, the one-sided Green's function associated with $\Theta_t(B)$, is such that $\sum\limits_{s=-\infty}^{t} |h(t,s)|^\delta <\infty$, for all $t$. Assume also that $\sum\limits_{s=-0}^{p} |\alpha_j(\cdot)|^2 < \infty$ for all $t$, and $\Phi_{t}(z)$ and $\Theta_{t}(z)$ $(\Theta_{t}(z) \neq 0$ for $|z|\leq 1)$ have no common roots. Then, the process \eqref{eq:tvARMA_stable_matrix} is invertible and its explicit inversion is given by 
	\begin{equation}
	\varepsilon_{t}= \sum_{j=0}^{\infty} b_{t,T}(j) X_{t-j,T}.
	\end{equation}
	where $X_{t,T}$ denotes an arbitrary solution and the coefficients are uniquely determined by 
	\begin{equation}
	b_{t,T}(j)=
	\left\{ \begin{array}{l r} 
	0, & j<0,\\
	\frac{1}{\gamma(\frac{t}{T})} , & j=0, \\
	\frac{1}{\gamma(\frac{t}{T})} \sum\limits_{l=0}^k \alpha_k (\frac{t-j+k}{T}) h(t,t-j+k), & 0\leq j \leq p, \\
	\frac{1}{\gamma(\frac{t}{T})} \sum\limits_{l=0}^q \alpha_k (\frac{t-j+k}{T}) h(t,t-j+k),  & j>p. 
	\end{array}	
	\right.
	\end{equation}
\end{theorem}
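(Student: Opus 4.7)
\medskip

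\noindent\textbf{Proof proposal.} The plan is to mirror, on the MA side, the construction that was carried out for the AR side in the preceding theorem. The governing equation $\Phi_t(B) X_{t,T} = \Theta_t(B) z_{t,T}$ with $z_{t,T}=\gamma(t/T)\varepsilon_t$ is symmetric between the two operators, so I would rewrite it as
\begin{equation*}
\Theta_t(B)\, z_{t,T} \;=\; \Phi_t(B)\, X_{t,T},
\end{equation*}
and treat $z_{t,T}$ as the unknown driven by the right-hand side. Then, because $\beta_q(t/T)\neq 0$ for every $t$, the homogeneous equation $\Theta_t(B) u_t = 0$ admits $q$ linearly independent solutions and, following the construction in \eqref{eq:solution_homo_dif_equation}--\eqref{eq:green_function} but with $\Theta_t(B)$ in place of $\Phi_t(B)$, yields a unique one-sided Green's function $h(t,s)$ associated with $\Theta_t(B)$.

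Using $h(t,s)$, I would write the formal particular solution of $\Theta_t(B) z_{t,T} = \Phi_t(B) X_{t,T}$ as $z_{t,T}=\sum_{s\le t} h(t,s)\bigl[\Phi_s(B)X_{s,T}\bigr]$, then re-index the double sum by the lag $j$ relative to $t$ and collect the coefficient of $X_{t-j,T}$. Dividing by $\gamma(t/T)$ gives the candidate $b_{t,T}(j)$ stated in the theorem: a finite convolution of $\alpha_k(\cdot)$ with $h(t,\cdot)$ for $0\le j\le p$ and a $q$-term convolution (really, a $p$-term convolution involving $\alpha_0,\ldots,\alpha_p$; the index range in the displayed formula should read $l=0,\ldots,p$) for $j>p$, with the convention $b_{t,T}(0)=1/\gamma(t/T)$ coming from $\alpha_0\equiv 1$ and $h(t,t)=1$.

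The next step is convergence. I would use the hypothesis $\sum_{s\le t}|h(t,s)|^\delta<\infty$ together with $\sum_{k=0}^{p}|\alpha_k(\cdot)|^2<\infty$ (which, since the sum is finite, gives in particular a uniform $\ell^\delta$ bound on $\alpha_k$) to show $\sum_{j\ge 0}|b_{t,T}(j)|^\delta<\infty$ for all $t$, where $\delta=\min\{1,\alpha\}$. By Proposition 13.3.1 of Brockwell and Davis (the same result cited immediately after Definition \ref{def:ARMAregular}), this $\ell^\delta$-summability implies absolute a.s.\ convergence of the series $\sum_j b_{t,T}(j) X_{t-j,T}$ when the $X_{t-j,T}$ are $\alpha$-stable (this is where the special structure of stable inputs is used in place of a second-moment argument). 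Substituting this series back into $\Theta_t(B) z_{t,T}=\Phi_t(B) X_{t,T}$ and verifying term-by-term identification (legitimate because of absolute convergence) shows that $\sum_j b_{t,T}(j) X_{t-j,T}=\varepsilon_t$ a.s., giving both existence and the stated explicit form.

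For uniqueness I would invoke the hypothesis that $\Phi_t(z)$ and $\Theta_t(z)$ share no common roots together with $\Theta_t(z)\neq 0$ for $|z|\le 1$: any two MA-regular representations would have to differ by a solution of the homogeneous equation $\Theta_t(B) u_t = 0$ that is also $\ell^\delta$-summable, and the root condition on $\Theta_t(z)$ forces every such solution to be identically zero, exactly as in the time-invariant invertibility argument. The main obstacle I anticipate is not algebraic but analytic: making the passage from the \emph{pointwise} construction of $h(t,s)$ and the \emph{termwise} rearrangement of the double sum into a rigorous a.s.\ identity under the stable-noise $\ell^\delta$ convergence theory, since Fubini-type manipulations that are automatic in the $L^2$ setting require one to appeal to the Brockwell--Davis criterion each time; handling the boundary cases $j=0,\ldots,p$ versus $j>p$ (where the upper limit of the convolution changes) also requires care to ensure the two regimes match up at $j=p$.
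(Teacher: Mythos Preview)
The paper does not actually supply a proof of this theorem: it is stated immediately after the AR-regularity theorem and left without a proof environment, the implicit justification being that it is the exact dual of the preceding result (whose proof is itself only the one-line remark ``the proof is similar to \cite{Peiris2001b}''). Your proposal to swap the roles of $\Phi_t(B)$ and $\Theta_t(B)$, build the Green's function $h(t,s)$ from the homogeneous equation $\Theta_t(B)u_t=0$, and then read off the $b_{t,T}(j)$ by re-indexing is precisely that intended duality argument, so your approach coincides with what the paper has in mind.

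One caveat worth flagging: when you invoke Proposition~13.3.1 of Brockwell--Davis to justify absolute convergence of $\sum_j b_{t,T}(j)X_{t-j,T}$, note that this proposition is stated for linear combinations of \emph{i.i.d.}\ stable variables, whereas the $X_{t-j,T}$ are dependent. The clean way around this (and presumably what Peiris do) is to substitute the AR-regular representation $X_{t-j,T}=\sum_k a_{t-j,T}(k)\varepsilon_{t-j-k}$ first, rearrange into a single sum in the i.i.d.\ $\varepsilon$'s, and only then apply the $\ell^\delta$ criterion to the resulting compound coefficients. Your acknowledged ``main obstacle'' about Fubini-type rearrangements is exactly where this shows up; otherwise the argument is sound, and your observation that the displayed upper index $q$ in the $j>p$ case should be $p$ is a genuine typo in the statement.
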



\begin{theorem}
	Let $\left\lbrace X_{t,T} \right\rbrace $ be a sequence of stochastic process that satisfies \eqref{eq:tvARMA_stable} that is AR regular. The solution $ X_{t,T} $ of the form \eqref{eq:stable_ma_inf} is strictly stable and ${X_{t,T} \sim S_\alpha(\sigma^*,\beta^*,0)}$, with
	$$
	\sigma^*=\left( \frac{1}{\sqrt{2}}\right) \left\lbrace \sum_{j=0}^{\infty} \left| a_{t,T}(j) \right|^\alpha  \right\rbrace ^{1/\alpha},\text{and}~~ \beta^*=\beta \left\lbrace  \frac{ \sum\limits_{j=0}^{\infty} \sign\left[ a_{t,T}(j)\right]  \left| a_{t,T}(j) \right|^\alpha}{ \sum\limits_{j=0}^{\infty} \left| a_{t,T}(j) \right|^\alpha }\right\rbrace .
	$$
\end{theorem}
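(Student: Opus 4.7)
The plan is to combine the MA($\infty$) representation guaranteed by AR regularity with the standard closure property of $\alpha$-stable laws under linear combinations, and then pass to the limit in the characteristic function. Throughout, the key structural fact is that for $c \in \mathbb{R}$ and $Y \sim S_\alpha(\sigma,\beta,0)$ with $\alpha \in (0,2)$, $\alpha \neq 1$, one has $cY \sim S_\alpha(|c|\sigma,\mathrm{sign}(c)\beta,0)$, and for independent stable summands one adds the $\alpha$-th powers of the scales and takes a scale-weighted average of the skewnesses (Property 1.2.1 and Property 1.2.3 in Samorodnitsky and Taqqu, 1994, or Chapter 13 of Brockwell and Davis, 1991).

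First I would fix $t$ and $T$, write $a_j := a_{t,T}(j)$, and consider the finite partial sum $X_{t,T}^{(N)} := \sum_{j=0}^{N} a_j \varepsilon_{t-j}$. Since the $\varepsilon_{t-j}$ are i.i.d.\ with $\varepsilon_{t-j} \sim S_\alpha(1/\sqrt{2},\beta,0)$, each term $a_j \varepsilon_{t-j}$ is $S_\alpha(|a_j|/\sqrt{2},\mathrm{sign}(a_j)\beta,0)$, and iterating the sum rule yields
\begin{equation*}
X_{t,T}^{(N)} \sim S_\alpha\!\left( \sigma^{*}_N,\, \beta^{*}_N,\, 0\right),
\end{equation*}
with $\sigma^{*}_N = (1/\sqrt{2})\bigl(\sum_{j=0}^{N}|a_j|^\alpha\bigr)^{1/\alpha}$ and $\beta^{*}_N = \beta \sum_{j=0}^{N}\mathrm{sign}(a_j)|a_j|^\alpha / \sum_{j=0}^{N}|a_j|^\alpha$. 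This is a direct computation using the characteristic function \eqref{stable_ch.func.} applied to each independent summand and taking the product.

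Second, I would invoke AR regularity: the hypothesis $\sum_{j\geq 0}|a_j|^\delta<\infty$ with $\delta=\min\{1,\alpha\}$ together with Proposition 13.3.1 in Brockwell and Davis (1991) implies $\sum_{j\geq 0}|a_j|^\alpha < \infty$, so $X_{t,T}^{(N)} \to X_{t,T}$ almost surely (and in probability) as $N\to\infty$. By continuity of the map $N \mapsto (\sigma^{*}_N,\beta^{*}_N)$, we have $\sigma^{*}_N \to \sigma^{*}$ and $\beta^{*}_N \to \beta^{*}$ as defined in the statement. Plugging these into the stable characteristic function and using dominated convergence shows that the characteristic function of $X_{t,T}^{(N)}$ converges pointwise to the characteristic function of $S_\alpha(\sigma^{*},\beta^{*},0)$. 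By Lévy's continuity theorem, the a.s.\ limit $X_{t,T}$ has that distribution, which establishes the claim.

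The main obstacle is mild but worth flagging: one must ensure the denominator $\sum_{j\geq 0}|a_j|^\alpha$ in $\beta^{*}$ is strictly positive (so $\beta^{*}$ is well-defined) and handle the case $\alpha = 1$ separately, since the location term in \eqref{stable_ch.func.} acquires a $\log|\theta|$ correction that can shift the limiting location. For $\alpha \neq 1$ the argument above is clean; for $\alpha = 1$ with $\beta = 0$ (Cauchy innovations) the same computation goes through since the logarithmic terms vanish, while the general $\alpha=1$ case requires tracking the induced shift so that the resulting location is $0$ — this is the only place where a bit of care is needed beyond the standard machinery.
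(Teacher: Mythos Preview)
Your argument is correct and follows the same route as the paper: both rely on the closure of $\alpha$-stable laws under (independent) linear combinations to identify the distribution of $X_{t,T}$, with the paper simply citing this fact together with Property~1.2.6 of Samorodnitsky--Taqqu (strict stability when the location parameter is~$0$), whereas you supply the explicit partial-sum computation and the passage to the limit via L\'evy's continuity theorem. Your flag on the $\alpha=1$, $\beta\neq 0$ case is well taken---the scaling rule for $\alpha=1$ introduces a $-(2/\pi)c\sigma\beta\ln|c|$ shift in the location, so the limiting location need not be~$0$ in that case; the paper does not address this, so your version is in fact more careful on that point.
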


\begin{proof}
	The explicit form of the solution is straightforward since the linear combination of stable distributions is also stable. Moreover, the Property 1.2.6 from \cite{Samorodnitsky1994} implies that for each $t$, the solution $X_{t,T}$ is strictly stable since each of them has location parameter equals $0$.  
\end{proof}

\subsection{Local Stationarity}

Similar to the Proposition 2.4 in \cite{Dahlhaus2009}, we can present the corresponding version for stable innovations. Since it is not a second-order process, the time-varying spectral density does not exist.

\begin{theorem} \label{theorem:tvARMA_stable} Consider the system of difference equations in \eqref{eq:tvARMA_stable} satisfying the AR regular conditions stated above. Suppose that all $\alpha_j(\cdot)$ and $\beta_k(\cdot)$, as well as $\gamma^2(\cdot)$ are of bounded variation. Then, there exists a solution of the form 
	$$
	X_{t,T}= \sum_{j=0}^{\infty} a_{t,T}(j) \varepsilon_{t-j},
	$$
	which fullfills \eqref{assumption1_iia}, \eqref{assumption1_iib} and \eqref{assumption1_iic}.
\end{theorem}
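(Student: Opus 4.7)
The plan is to construct $a(u,j)$ as the MA$(\infty)$ coefficient sequence of the \emph{frozen} stationary ARMA process with coefficients $\alpha_j(u)$, $\beta_k(u)$, $\gamma(u)$ held fixed at time $u\in(0,1]$, and then to verify the three conditions by comparing this frozen representation with the time-varying one already produced by the existence theorem. The AR regularity hypothesis, together with the assumption that $\Phi_t(z)\neq 0$ for $|z|\leq 1$, forces the roots of the frozen polynomial $\Phi_u(z)$ to lie uniformly outside the closed unit disc; hence the associated stationary Green's function $g_u(\cdot)$ decays geometrically, and the explicit formula of the existence theorem gives $a(u,j) = \gamma(u)\sum_{k=0}^{q\wedge j}\beta_k(u)\,g_u(j-k)$ with $\sup_u |a(u,j)|\leq C\rho^{j}$ for some $\rho<1$. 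Since $C\rho^j \leq K/\ell(j)$ for the usual choices of $\ell(\cdot)$, condition \eqref{assumption1_iia} follows.

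For the approximation bound \eqref{assumption1_iib}, I would compare the two explicit expressions for $a_{t,T}(j)$ and $a(t/T,j)$ coming from the time-varying and frozen Green's functions. Writing $g(t,t-j+k) - g_{t/T}(j-k)$ as a telescoping sum along the time points $s$ at which $\Phi_s$ enters the recursion, each term is dominated by the local increment $|\alpha_i(s/T)-\alpha_i((s+1)/T)|$ times a factor that decays geometrically in the distance from $t$ (the same mechanism used in the proof of Proposition~2.4 of \cite{Dahlhaus2009}); analogous telescopings handle the $\beta_k$ and $\gamma$ factors. Interchanging the order of summation so that the sum over $t$ is performed first produces total-variation sums $V(\alpha_i(\cdot))$, $V(\beta_k(\cdot))$, $V(\gamma^2(\cdot))$, all finite by hypothesis, while the residual sum over $j$ of the geometric weights is a uniform constant. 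This telescoping, carried out so that the bound is uniform in $j$, is the genuinely delicate step, since $g(t,s)$ is not available in closed form and one must control products of $O(j)$ slowly varying factors without losing the $1/\ell(j)$ decay.

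Condition \eqref{assumption1_iic} follows because $a(\cdot,j)$ is a finite algebraic combination of the bounded-variation functions $\alpha_i(\cdot)$, $\beta_k(\cdot)$, $\gamma(\cdot)$ composed with the map that sends ARMA coefficients to the corresponding stationary MA$(\infty)$ coefficient $g_u(j-k)$; this map is real-analytic on the open set where $\Phi_u(z)\neq 0$ for $|z|\leq 1$, and on this set its partial derivatives with respect to the coefficients themselves decay geometrically in $j$. Since bounded variation is preserved under sums, products and Lipschitz transformations of uniformly bounded functions, one obtains $V(a(\cdot,j))\leq C\rho^{j}\bigl(\sum_i V(\alpha_i) + \sum_k V(\beta_k) + V(\gamma^2)\bigr)\leq K/\ell(j)$.

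The only point at which the stable setting differs from the Gaussian original is the notion of convergence of the MA$(\infty)$ series, which is here absolute in the sense $\sum_j |a_{t,T}(j)|^{\delta}<\infty$ with $\delta=\min\{1,\alpha\}$ rather than in $L^2$; since all the arguments above rely only on absolute (indeed geometric) summability of the coefficients, the second-moment hypothesis of \cite{Dahlhaus2009} can be dispensed with and the proof transfers essentially verbatim. The principal obstacle is therefore not the stability assumption itself but the uniform-in-$j$ telescoping control of the time-varying Green's function described in the second paragraph.
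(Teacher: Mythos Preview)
Your proposal is correct and follows essentially the same route as the paper: define $a(u,j)$ via the frozen ARMA coefficients, note that the geometric decay of the (companion-matrix/Green's-function) representation yields \eqref{assumption1_iia}--\eqref{assumption1_iic} exactly as in the appendix of \cite{Dahlhaus2009}, and observe that the stable-innovation hypothesis plays no role because the argument concerns only the coefficient sequences. The paper's own proof is in fact terser than yours---it restricts to the tvAR$(p)$ case, writes $a_{t,T}(j)$ as the $(1,1)$ entry of the product $\prod_{\ell=0}^{j-1}\boldsymbol{\alpha}((t-\ell)/T)$ times $\gamma((t-j)/T)$, and then simply cites \cite{Dahlhaus2009} for the verification of the three conditions---so the telescoping mechanism you outline is precisely the content being invoked there.
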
	

\begin{proof}
	We give the proof for tvAR(p) process (i.e. $q=0$) and then the extension to tvARMA(p,q) is straightforward. Since the process \eqref{eq:tvARMA_stable} is AR regular, there exists a solution of the form
	$$
	X_{t,T}= \sum_{j=0}^{\infty} a_{t,T}(j) \varepsilon_{t-j},
	$$
	that is well defined and the coefficients are given by 
	\begin{equation*}
	a_{t,T}(j)= \left[ \prod_{\ell=0}^{j-1} \boldsymbol{\alpha}\left( \frac{t-\ell}{T}\right)  \right]_{11} \gamma \left( \frac{t-j}{T} \right) 
	\end{equation*}
	with
	$$
	\boldsymbol{\alpha}(u)=
	\begin{pmatrix} 
	-\alpha_1(u) & -\alpha_2(u) & \cdots & \cdots & -\alpha_p(u) \\ 
	1 & 0 & \cdots & \cdots & 0 \\
	0& \ddots & \ddots & & \vdots \\
	\vdots & \ddots & \ddots & \ddots & \vdots \\
	0& \cdots & 0 & 1 & 0	
	\end{pmatrix}
	$$
	\cite[for more detail see][]{Miller1968}. Then, the proof of the existence of the functions $\alpha(\cdot,j)$ satisfying \eqref{assumption1_iia}, \eqref{assumption1_iib} and \eqref{assumption1_iic} follows the same proof to that of finite innovation case (see Appendix in \cite{Dahlhaus2009}).	
\end{proof}

\begin{remark}
	\hfill
	\label{remark1}
	\begin{enumerate}
		\item Note that since $a_{t,T}(j)\approx a\left( \frac{t}{T},j\right)$, $X_{t,T}$ can be approximated by 
		\begin{equation}
		\label{eq:stable_ma_inf_aprox}
		\tilde{X}_{t,T}= \sum_{j=0}^{\infty} a\left( \frac{t}{T},j\right) \varepsilon_{t-j},
		\end{equation}
		which converges a.s. if and only if $\sum\limits_{j=0}^{\infty} \left| a\left( \frac{t}{T},j\right)\right|^\alpha< \infty.$
		
		Moreover, $\tilde{X}_{t,T} \sim S_\alpha(\sigma^+,\beta^+,0)$, with
		$$
		\sigma^+= \frac{1}{\sqrt{2}}\left\lbrace \sum_{j=0}^{\infty} \left| a\left( \frac{t}{T},j\right) \right|^\alpha  \right\rbrace ^{1/\alpha},\text{and}~~ \beta^+=\beta \left\lbrace  \frac{ \sum\limits_{j=0}^{\infty} \sign\left[ a\left( \frac{t}{T},j\right) \right]  \left| a\left( \frac{t}{T},j\right) \right|^\alpha}{ \sum\limits_{j=0}^{\infty} \left| a\left( \frac{t}{T},j\right) \right|^\alpha }\right\rbrace.
		$$
		\item $X_{t,T}$ in \eqref{eq:stable_ma_inf} can be expressed as a linear combination of $\alpha$-stable random variables and $X_{t,T}$ is strictly stable with the same index of stability $\alpha$. 
		\item Observe that $X_{t,T}$ is not strictly stationary, but it can be approximated by $\tilde{X}_{t,T}$ which is locally (strictly) stationary and strictly stable with the same index of stability.
		\item Weak stationarity does not make sense since the second moment does not exist. Consequently, the time-varying spectral representation does not exist.
		\item Let $X_{1,T},...,X_{T,T}$ be the sequence of solutions defined in \eqref{eq:stable_ma_inf} and $\tilde{X}_{1,T},...,\tilde{X}_{T,T}$ be the sequence of the stochastic process defined in \eqref{eq:stable_ma_inf_aprox}. Both processes are strictly $\alpha-$stable, since all linear combinations are strictly stable with the same index of stability. This means that the weak stationarity is lost but it is substituted by the same tail behavior throughout the time. This is the reason we call this process $\alpha-$stable locally (strictly) stationary process. 
	\end{enumerate}
	
\end{remark}

If we consider the symmetric $\alpha-$stable ($S\alpha S$) innovations, i.e. $\beta=0$, the simplest form is obtained.

\begin{corollary}[tvARMA with symmetric stable innovations]
	Let $X_{t,T}$ be a sequence of stochastic process that satisfies \eqref{eq:tvARMA_stable} with i.i.d. $S\alpha S$ innovations, that is, $\varepsilon_t \sim S_\alpha ( \nicefrac{1}{\sqrt{2}},0,0)$ with $\alpha \in (0,2)$. Then, there exists a solution of the form \eqref{eq:stable_ma_inf}. This solution $ X_{t,T} $ is symmetric, $\alpha-$stable and $X_{t,T} \sim S_\alpha(\sigma^*,0,0)$, with
	$$
	\sigma^*= \frac{1}{\sqrt{2}} \left\lbrace \sum_{j=0}^{\infty} \left| a_{t,T}(j) \right|^\alpha  \right\rbrace ^{1/\alpha}.
	$$
	Similarly to the general case, $X_{t,T}$ can be approximated by $\tilde{X}_{t,T} \sim S_\alpha(\sigma^+,0,0)$ as in \eqref{eq:stable_ma_inf_aprox}, with
	$$
	\sigma^+= \left\lbrace \sum_{j=0}^{\infty} \left| a\left( \frac{t}{T},j\right) \right|^\alpha  \right\rbrace ^{1/\alpha}.
	$$
\end{corollary}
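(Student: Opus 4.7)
The plan is to obtain this corollary as a direct specialization of the general Theorem giving $X_{t,T}\sim S_\alpha(\sigma^*,\beta^*,0)$ to the symmetric case $\beta=0$, so nothing essentially new is needed beyond verifying that each ingredient carries over and that the $\beta^*$ in the general formula vanishes.

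First, I would invoke the existence/AR‐regularity Theorem proved above to guarantee that under the stated hypotheses (which coincide with those of the general result once we set $\beta=0$) there is a solution of the form
\begin{equation*}
X_{t,T}=\sum_{j=0}^{\infty} a_{t,T}(j)\,\varepsilon_{t-j},
\end{equation*}
with the coefficients $a_{t,T}(j)$ determined as before and satisfying $\sum_{j=0}^{\infty}|a_{t,T}(j)|^{\delta}<\infty$ for $\delta=\min\{1,\alpha\}$. This is exactly the conclusion of the earlier AR‐regularity theorem and requires no modification in the symmetric case.

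Next, I would apply the Theorem which showed $X_{t,T}\sim S_\alpha(\sigma^*,\beta^*,0)$. The series $X_{t,T}$ is a convergent linear combination of i.i.d.\ stable random variables, so by Property~1.2.1 of \cite{Samorodnitsky1994} it is itself $\alpha$-stable with the scale and skewness given by the general formulas. Substituting $\beta=0$ into
\begin{equation*}
\beta^*=\beta\left\{\frac{\sum_{j=0}^{\infty}\sign[a_{t,T}(j)]|a_{t,T}(j)|^{\alpha}}{\sum_{j=0}^{\infty}|a_{t,T}(j)|^{\alpha}}\right\}
\end{equation*}
yields $\beta^*=0$ regardless of the sign pattern of the $a_{t,T}(j)$, so $X_{t,T}\sim S_\alpha(\sigma^*,0,0)$ with the stated $\sigma^*=(1/\sqrt{2})\{\sum_{j}|a_{t,T}(j)|^{\alpha}\}^{1/\alpha}$. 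Symmetry about $0$ is then immediate from the characteristic function in \eqref{stable_ch.func.} with $\beta=0$ and $\mu=0$.

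Finally, for the approximating process $\tilde{X}_{t,T}$, I would repeat the same argument using Remark~\ref{remark1}(1): the series $\sum_{j}a(t/T,j)\varepsilon_{t-j}$ converges a.s.\ under $\sum_{j}|a(t/T,j)|^{\alpha}<\infty$ (which is inherited from the bounded variation hypotheses on the coefficient functions via Theorem~\ref{theorem:tvARMA_stable}), and the same closure‐under‐linear‐combinations argument gives $\tilde{X}_{t,T}\sim S_\alpha(\sigma^+,0,0)$ with the stated scale. There is really no obstacle here; the only point where one has to be slightly careful is to note that setting $\beta=0$ makes the expression for $\beta^*$ collapse unambiguously to zero, so no issue arises from the sign of the $a_{t,T}(j)$'s, and this is what makes the symmetric case genuinely simpler than the general one.
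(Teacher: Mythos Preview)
Your proposal is correct and matches the paper's approach: the paper does not even write out a separate proof for this corollary, treating it as an immediate specialization of the general theorem $X_{t,T}\sim S_\alpha(\sigma^*,\beta^*,0)$ and Remark~\ref{remark1} upon setting $\beta=0$. Your observation that $\beta^*$ collapses to zero regardless of the signs of the $a_{t,T}(j)$ is exactly the point that makes this a corollary rather than a new argument.
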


In the symmetric case, in addition to the properties in Remark \ref{remark1}, note that the processes $X_{t,T}$ and $\tilde{X}_{t,T}$ are symmetric $\alpha-$stable. 	

\subsection{Some examples}

\begin{example}	
	The tvMA(q) model with stable innovations:
	\begin{equation}
	\label{eq:tvMAq}
	X_{t,T}=\sum_{k=0}^{q} \beta_k\left( \frac{t}{T} \right)\gamma \left( \frac{t-k}{T} \right)  \varepsilon_{t-k}.
	\end{equation}	
\end{example}

\begin{example}	
	\normalfont
	Consider the tvAR(p) model with stable innovations
	\begin{equation}
	\label{eq:tvARp}
	\sum_{j=0}^{p} \alpha_j\left( \frac{t}{T} \right)  X_{t-j,T} = \gamma\left( \frac{t}{T} \right) \varepsilon_{t}.
	\end{equation}
	Underregularity conditions, $X_{t,T}$ does not have a solution of the form
	$$
	X_{t,T}=\sum\limits_{k=0}^{\infty} a_k \left( \frac{t}{T} \right) \varepsilon_{t-k},
	$$
	but only of the form \eqref{eq:stable_ma_inf} with 
	\begin{equation}
	a_{t,T}(j)= \left[ \prod_{\ell=0}^{j-1} \boldsymbol{\alpha}\left( \frac{t-\ell}{T}\right)  \right]_{11} \gamma \left( \frac{t-j}{T} \right) 
	\end{equation}
	 and
	$$
	\boldsymbol{\alpha}(u)=
	\begin{pmatrix} 
	-\alpha_1(u) & -\alpha_2(u) & \cdots & \cdots & -\alpha_p(u) \\ 
	1 & 0 & \cdots & \cdots & 0 \\
	0& \ddots & \ddots & & \vdots \\
	\vdots & \ddots & \ddots & \ddots & \vdots \\
	0& \cdots & 0 & 1 & 0	
	\end{pmatrix}
	$$ where $\boldsymbol{\alpha}(u)=\boldsymbol{\alpha}(0)$ for $u<0$ \cite{Dahlhaus2009}. Moreover, $a_{t,T}(j)$ can be approximated by $a(u,j)=\left( \boldsymbol{\alpha}(u)^j\right)_{11} \gamma(u) $ which satisfies \eqref{assumption1_iia}, \eqref{assumption1_iib} and \eqref{assumption1_iic}.
	
	Figure \ref{fig:tvAR(1)} presents simulated tvAR(1) process of $T=1000$ observations with different innovation distribution (Gaussian, $t_{(3)}$ and symmetric stable innovations, $\alpha=1.7,1.4,0.9,0.6$) and a linear coefficient $\alpha_1(u)=-0.2+0.6u$ and $\gamma(u)=1$. We observe that for smaller $\alpha$, the process seems to have more outliers.
\end{example}
	
	\begin{figure}[!htbp]
	\centering
	\includegraphics[width=\textwidth]{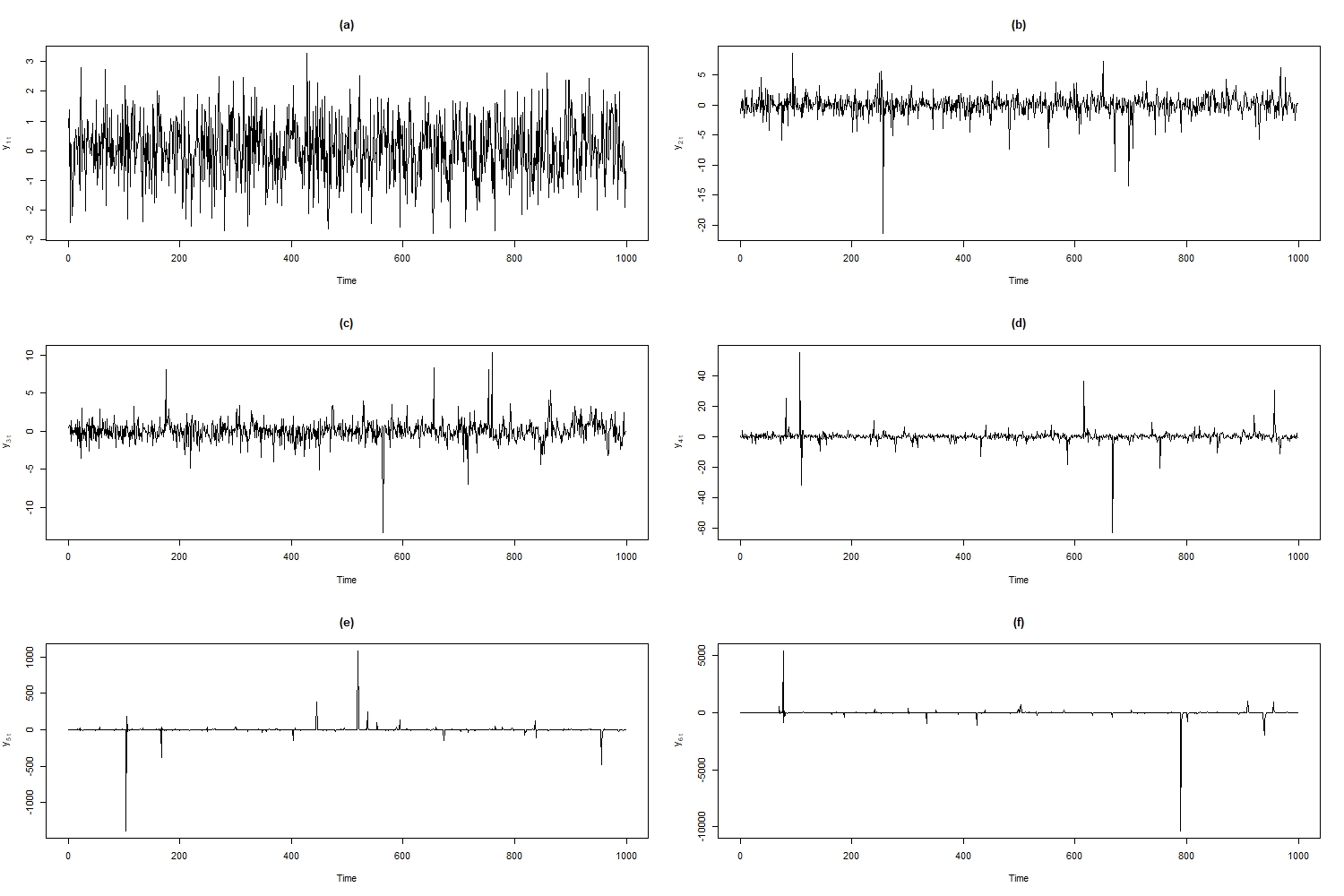}
	\caption{Simulated tvAR(1) assuming time-varying coefficient $\alpha_1(u)=-0.2+0.6u$ and $\gamma(u)=1$, with (a) Gaussian, (b) $t_{(3)}$ innovations, and symmetric stable innovations with (c) $\alpha=1.7$, (d) $\alpha=1.4$, (e) $\alpha=0.9$ and (f) $\alpha=0.6$.} 
	\label{fig:tvAR(1)}
\end{figure}

\subsection{Prediction}

Recalling that $\alpha-$stable tvARMA has infinite variance, prediction results based on stable ARMA processes with dependent coefficients are presented by \cite{Peiris2001a,Peiris2001b}. Then, it is possible to predict future values along with the approach applied by \cite{VanBellegem2004}, which considers the observed values $X_{0,T},\cdots,X_{T-h-1,T}$ and rescaling the time interval to $ \left[0, 1-\frac{h+1}{T} \right]$, where $h$ is the forecasting horizon and the ratio $h/T$ tends to zero as $T$ tends to infinity. Here, we consider that the innovations are $S \alpha S$ random variables. 

Suppose that we have the system of difference equation \eqref{eq:tvARMA_stable} that satisfies above regular conditions, and $X_{0,T},\cdots,X_{T',T}$ with $T'=T-h-1$ are observed. We are interested in predictions with horizon $h$, i.e. $X_{T-h,T},...,X_{T,T}$.

Since $X_{t,T}$ is AR regular, it can be expressed as
\begin{equation}
X_{t,T}= \sum_{j=0}^{\infty} a_{t,T}(j) \varepsilon_{t-j}.
\end{equation}
Let $\hat{X}_{T'}(l)$ be the best linear predictor of $X_{T'+l,T}$ for $l=1,...,h$, namely
\begin{equation}
\hat{X}_{T'}(l)= \sum\limits_{j=0}^{\infty} A(T',T'-j)\varepsilon_{T'-j},
\end{equation}
where $A(T',T'-j)$ are some functions. Since the prediction error $e_{T'}(l)=\hat{X}_{T'+h,T}-X_{T'}(l)$ is also $S \alpha S$ random variable, it is possible to define its dispersion as $d=\sigma^\alpha$ with $\sigma$ its scale parameter. The idea is to minimize the dispersion $d$. Note that
\begin{equation}	
\begin{array}{l l} 
e_{T'}(l) & =X_{T'+l,T}-\hat{X}_{T'}(l) \\
& = \sum\limits_{j=0}^{\infty} a_{T'+l,T}(j)\varepsilon_{T'+l-j} - \sum\limits_{j=0}^{\infty} A(T',T'-j)\varepsilon_{T'-j} \\
& = \sum\limits_{j=0}^{l-1} a_{T'+l,T}(j)\varepsilon_{T'+l-j}+ \sum\limits_{j=l}^{\infty} a_{T'+l,T}(j)\varepsilon_{T'+l-j}- \sum\limits_{j=0}^{\infty} A(T',T'-j)\varepsilon_{T'-j}\\
& =\sum\limits_{j=0}^{l-1} a_{T'+l,T}(j)\varepsilon_{T'+l-j}+ \sum\limits_{j=0}^{\infty} \left(  a_{T'+l,T}(j+l)- A(T',T'-j)\right)  \varepsilon_{T'-j}.
\end{array}	
\end{equation}
Then, assuming $\varepsilon_{t} \sim S_\alpha(\frac{1}{\sqrt{2}},0,0)$ and using properties of $S \alpha S$ random variables, its dispersion is
\begin{equation}
\label{eq:disp_error}
\text{disp}\left[ e_{T'}(l)\right] =\left( \frac{1}{\sqrt{2}}\right)^\alpha \sum\limits_{j=0}^{l-1} |a_{T'+l,T}(j)|^\alpha+ \left( \frac{1}{\sqrt{2}}\right)^\alpha \sum\limits_{j=0}^{\infty} |\left(  a_{T'+l,T}(j+l)- A(T',T'-j)\right)|^\alpha.
\end{equation}
Minimizing the expression \eqref{eq:disp_error}, we obtain the following theorem.
\begin{theorem}
	The minimum dispersion predictor is given by
	\begin{equation}
	\hat{X}_{T'}(l)= \sum\limits_{j=0}^{\infty} a_{T'+l,T}(j+l)\varepsilon_{T'-j}.
	\end{equation}
\end{theorem}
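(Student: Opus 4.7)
The plan is to read off the optimal predictor directly from the decomposition of the dispersion given in \eqref{eq:disp_error}. Observe that the first sum, $(1/\sqrt{2})^\alpha \sum_{j=0}^{l-1}|a_{T'+l,T}(j)|^\alpha$, is completely determined by the MA($\infty$) coefficients of $X_{t,T}$ and does not involve the unknown predictor weights $A(T',T'-j)$. Hence minimizing the dispersion of $e_{T'}(l)$ is equivalent to minimizing the second sum alone, and the problem reduces to a pointwise optimization over the countable collection $\{A(T',T'-j)\}_{j\geq 0}$.

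For that second sum, since $|x|^\alpha \geq 0$ for every $\alpha \in (0,2)$ with equality if and only if $x=0$, each term is non-negative and the infimum of the whole sum is attained term by term. The unique choice of weights that annihilates every summand is
$$
A(T',T'-j) = a_{T'+l,T}(j+l), \qquad j \geq 0,
$$
which, substituted into the ansatz $\hat{X}_{T'}(l) = \sum_{j\geq 0} A(T',T'-j)\varepsilon_{T'-j}$, yields precisely the expression stated in the theorem, with residual minimum dispersion equal to $(1/\sqrt{2})^\alpha\sum_{j=0}^{l-1}|a_{T'+l,T}(j)|^\alpha$.

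The step I expect to merit explicit verification is not the optimization itself but the fact that the proposed predictor is a well-defined random variable. AR regularity of $X_{t,T}$ yields $\sum_{j\geq 0}|a_{T'+l,T}(j)|^\delta < \infty$ with $\delta = \min\{1,\alpha\}$, and the shifted tail $\sum_{j\geq 0}|a_{T'+l,T}(j+l)|^\delta$ is a subseries of this absolutely summable one, hence also finite. By Proposition 13.3.1 of \cite{Brockwell1991}, this $\delta$-summability implies almost sure convergence of $\sum_{j\geq 0} a_{T'+l,T}(j+l)\varepsilon_{T'-j}$, so $\hat{X}_{T'}(l)$ is genuinely an $S\alpha S$ random variable and the term-by-term minimization is legitimate. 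The only real subtlety is that linear predictors of infinite-variance processes must be justified through $\delta$-summability rather than the more familiar $L^2$ Hilbert-space geometry that underlies the classical best-linear-predictor derivation.
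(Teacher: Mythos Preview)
Your argument is correct and follows the same route as the paper: both proofs read off the minimizer directly from the dispersion decomposition \eqref{eq:disp_error}, observing that the first sum is fixed and the second is a sum of non-negative terms that vanish precisely when $A(T',T'-j)=a_{T'+l,T}(j+l)$. Your additional check that the resulting series actually converges (via the tail $\delta$-summability inherited from AR regularity) is a welcome detail the paper leaves implicit.
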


\begin{proof}
	From \eqref{eq:disp_error}, it is straightforward to obtain
	$$
	\text{min disp}\left[ e_{T'}(l)\right]=\left( \frac{1}{\sqrt{2}}\right)^\alpha \sum\limits_{j=0}^{l-1} |a_{T'+l,T}(j)|^\alpha,
	$$
	with  $a_{T'+l,T}(j+l)= A(T',T'-j)$ for $j=0,1,...$.
\end{proof}

\section{Indirect inference for $\alpha-$stable tvARMA processes}
\label{sec:indirect_inference_stable_tvARMA}

The IM is the tvARMA with innovation $\varepsilon_t \sim S_\alpha (\nicefrac{1}{\sqrt{2}},\beta,0)$. We study the parameter estimation when the innovation parameters $\alpha$ and $\beta$ are known. Then, we study the case assuming unknown $\alpha$.

Suppose that the parameter curves of model of interest can be parametrized by a finite-dimensional parameter $\theta$. The estimation strategy is to consider an auxiliary model with the same parametric time varying coefficient structure with Student's t innovations. The conditional likelihood estimates, defined in the equation $(30)$ in \cite{Dahlhaus2012}, is,

\begin{equation}
\label{eq:parametric_fit}
\boldsymbol{\hat{\lambda}}= \operatorname*{argmin}_{\boldsymbol{\lambda}} \frac{1}{T} \sum_{t=1}^{T}  \ell_{t,T}\left( \theta_\lambda \left( \frac{t}{T} \right) \right),
\end{equation}
where $\ell_{t,T}(\boldsymbol{\theta})=-\log f_\theta(X_{t,T}|X_{t-1,T},...,X_{1,T})$. In practice, we will use t distribution with $\nu=3$ degrees of freedom for the case of known parameters since its tail is heavier than the Gaussian one. For unknown $\alpha$ case, we let $\nu$ to be estimated in the AM.

\section{Simulation study}
\label{sec:simulation}

This section presents a Monte Carlo (MC) simulation in order to investigate the properties of the indirect inference estimators. All the simulation programs and routines were implemented in R language. We present one scenario for each of the following models but still different values of $\alpha$ were selected. Some other scenarios were performed for each case and similar results were obtained and they are available upon request from the authors. For each scenario, simulations were done for $T=500$, $1000$ and $1500$ observations based on $R=1000$ independent replications. The indirect inference was carried out using $S=100$. For $\alpha$ known, we also performed the blocked Whittle estimation (BWE), proposed by \cite{Dahlhaus1997}, to compare the estimation of time structure of the model with indirect inference. The suggestion of block size $N=\floor{T^{0.8}}$ and shifting each block by $S=\floor{0.2 N}$ time units from \cite{Dahlhaus1998} is used.

\subsection{Known $\alpha$ case}
\label{sec:known_alpha}
\subsubsection{$\alpha$-stable tvAR(1)}

Consider tvAR(p) in \eqref{eq:tvARp} with $p=1$ and $\gamma\left( \frac{t}{T} \right)=\gamma$:
\begin{equation}\label{eq:tvAR1}
X_{t,T}+\alpha_1 \left( \frac{t}{T} \right) X_{t-1,T} =\gamma \varepsilon_{t},
\end{equation}
where $\varepsilon_{t} \sim S_\alpha \left( \nicefrac{1}{\sqrt{2}},\beta ,0 \right)$ with $\alpha$ and $\beta$ known.

We illustrate how the indirect inference can be employed to the tvAR(1) with the linear parametric form of the time varying coefficient $\alpha_1 \left(  u \right) =\theta_0+\theta_1 u$, and we consider that $\varepsilon_{t} \sim S_\alpha(\nicefrac{1}{\sqrt{2}},\beta,0)$ for $\alpha$ known. Therefore, the parameters of the IM is $\theta=\left( \theta_0,\theta_1, \gamma \right)$. 

The simulation was performed by assuming known parameters $\alpha=1.9$ and $\beta=0.9$ and unknown $(\theta_0,\theta_1,\gamma)=(-0.3,0.8,1)$. It is important to report that since $\alpha$ is close to $2$, all replications for the BWE converged. This outcome is expected since the innovation distributions approximate to the Gaussian distribution for $\alpha$ close to $2$. 

Table \ref{tab:simulation_tvAR1} reports the MC mean and standard error of both estimation methods. Notice that the MC mean from the indirect estimates seems to be consistent, that is, they approximate the real parameters and present lower standard errors as $T$ increases. On the other hand, the MC mean of the BWE are different from the real parameters and present higher standard errors compared to our estimation approach. 

Table \ref{tab:simulation_tvAR1_skew_kur} presents the kurtosis and skewness of all estimates from both methods. In general, all indirect estimates present lower kurtosis and the skewness close to $0$. Notice that since the second moment of the process does not exist, the parameter $\gamma$ estimates from the BWE present highly positive asymmetry and they subestimate the true parameter.

\begin{table}[!htbp]
	\centering
		\caption{MC means and standard errors for different sample size $T$, using indirect estimators and BWE assuming  $\alpha$ and $\beta$ known, from $\alpha$-stable tvAR(1) with $(\alpha,\beta,\theta_0,\theta_1,\gamma)=(1.9,0.9,-0.3,0.8,1),$  based on $R=1000$ replications.}
		\label{tab:simulation_tvAR1}
		\begin{tabular}{c cccccc}
			\toprule
			\multirow{2}{*}{$T$} & \multicolumn{3}{c}{Indirect estimates} & 
			\multicolumn{3}{c}{BWE} \\  \cline{2-7}
			& $\theta_0$ & $\theta_1$ & $\gamma$ & $\theta_0^{(W)}$ & $\theta_1^{(W)}$ & $\gamma^{(W)}$ \\ \hline
			\multirow{2}{*}{$500$} & -0.2952 & 0.7897 & 0.9966 & -0.2880 & 0.7825 & 1.2086 \\ 
			& (0.0881) & (0.1523) & (0.0366) & (0.1172) & (0.2216) & (0.6352) \\ 
			\multirow{2}{*}{$1000$} & -0.2975 & 0.7926 & 0.9996 & -0.2917 & 0.7845 & 1.2197 \\ 
			& (0.0585) & (0.1028) & (0.0260) & (0.0811) & (0.1545) & (0.4734) \\ 
			\multirow{2}{*}{$1500$} & -0.2974 & 0.7958 & 0.9997 & -0.2940 & 0.7926 & 1.2709 \\ 
			& (0.0494) & (0.0793) & (0.0209) & (0.0639) & (0.1162) & (0.8738) \\ 
			\bottomrule
		\end{tabular}
\end{table}

\begin{table}[!htbp]
	\centering
		\caption{Kurtosis and skewness of indirect estimates and BWE for different sample size $T$, assuming  $\alpha$ and $\beta$ known, from $\alpha$-stable tvAR(1) with $(\alpha,\beta,\theta_0,\theta_1,\gamma)=(1.9,0.9,-0.3,0.8,1)$, based on $R=1000$ replications.}
		\label{tab:simulation_tvAR1_skew_kur}
		\begin{tabular}{c ccccccc}
			\toprule
			\multirow{2}{*}{$T$} & & \multicolumn{3}{c}{Indirect estimates} & 
			\multicolumn{3}{c}{BWE} \\  \cline{3-8}
			& & $\theta_0$ & $\theta_1$ & $\gamma$ & $\theta_0^{(W)}$ & $\theta_1^{(W)}$ & $\gamma^{(W)}$ \\ \hline
			\multirow{2}{*}{$500$} & Kur & 3.0330 & 2.8565 & 3.1076 & 3.3783 & 3.0944 & 375.8563 \\ 
			& Skw & 0.1388 & -0.1354 & 0.1241 & -0.0129 & -0.0875 & 16.6778 \\ 
			\multirow{2}{*}{$1000$} & Kur &  3.1678 & 3.2835 & 2.7390 & 2.8722 & 2.9543 & 95.2261 \\ 
			& Skw &  0.0341 & -0.0260 & 0.0437 & 0.0057 & -0.1047 & 8.4487 \\ 
			\multirow{2}{*}{$1500$} & Kur &  3.1024 & 3.0645 & 2.9329 & 3.7487 & 6.1799 & 187.9560 \\ 
			& Skw & -0.0299 & 0.0248 & 0.0026 & 0.1707 & -0.5935 & 12.4661 \\ 
			\bottomrule
		\end{tabular}
\end{table}

Figure \ref{fig:tvAR_1_alpha1.9_density} shows the density estimates of each parameter. They show that the standard error become smaller as $T$ increases. Along with the results from Tables \ref{tab:simulation_tvAR1} and \ref{tab:simulation_tvAR1_skew_kur}, we can conclude that indirect estimates behave better than the BWE in terms of mean, standard error, skewness and kurtosis. Therefore, the simulation results show that the indirect inference performs well.

\begin{figure}[!htbp]
	\centering
	\includegraphics[width=\textwidth]{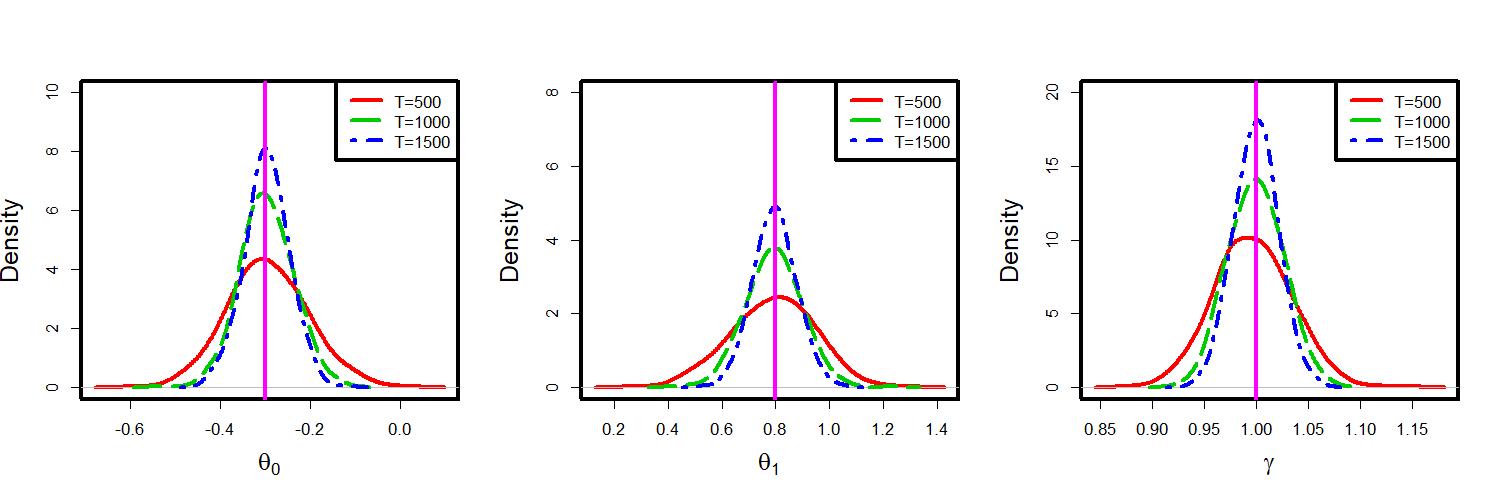}
	\caption{Density estimates of $\theta_0$, $\theta_1$ and $\gamma$ for different sample sizes, based on $R=1000$ replications from $\alpha$-stable tvAR(1) with $(\alpha,\beta,\theta_0,\theta_1,\gamma)=(1.9,0.9,-0.3,0.8,1)$, using indirect inference.} 
	\label{fig:tvAR_1_alpha1.9_density}
\end{figure}

\subsubsection{$\alpha$-stable tvMA(1)}

In this section, we carried out simulations for  a tvMA(q) in \eqref{eq:tvMAq} with $q=1$ and $\gamma\left( \frac{t}{T} \right)=\gamma$:
\begin{equation}\label{eq:tvMA1}
X_{t,T}=\gamma \left\lbrace  \varepsilon_{t} + \beta_1 \left( \frac{t}{T} \right)  \varepsilon_{t-1} \right\rbrace,
\end{equation}
where $\varepsilon_{t} \sim S_\alpha \left( \nicefrac{1}{\sqrt{2}},\beta,0 \right)$ with  $\alpha$ and $\beta$ known.

The indirect inference is employed for the linear parametric form of the time varying coefficient $\beta_1 \left(  u \right) =\theta_0+\theta_1 u$, and we consider that $\varepsilon_{t} \sim S_\alpha(\nicefrac{1}{\sqrt{2}},\beta,0)$ for known $\alpha$ and $\beta$. Hence, the vector of parameters of the model of interest is $\theta=\left( \theta_0,\theta_1, \gamma \right)$. 

This scenario assumes known $\alpha=1.1$ and $\beta=-0.2$ and unknown $(\theta_0,\theta_1,\gamma)=(0.35,-0.6,1.2)$. For the BWE case, we consider only $R= 939, 978$ and $978$ replications with converged estimates for $T=500, 1000$, and $1500$, respectively. This result is expected because BWE assumes finite second moment. The MC mean, standard error, kurtosis and skewness of estimates from the simulation are reported in the Table \ref{tab:simulation_tvMA1} and \ref{tab:simulation_tvMA1_skew_kur} and the density estimates in Figures \ref{fig:tvMA_1_alpha1.1_density}.

Similarly to the previous case, the indirect estimates seem to be consistent and the standard error become smaller as $T$ increases. For this case, since $\alpha$ is smaller, the distribution of indirect estimates has heavier tails, and they have similar kurtosis and skewness than  the BWE estimates, except for the parameter $\gamma$, when the indirect estimation behaves better. In addition, in term of standard error and MC mean, they still behave better than the BWE. We conclude that the indirect inference has a good performance. 

\begin{table}[!htbp]
	\centering
	\begin{threeparttable}
		\caption{MC mean and standard error for different sample size $T$, using indirect estimators and BWE assuming $\alpha$ and $\beta$ known, from $\alpha$-stable tvMA(1) with $(\alpha,\beta,\theta_0,\theta_1,\gamma)=(1.1,-0.2,0.35,-0.6,1.2)$, based on $R=1000$ replications.}
		\label{tab:simulation_tvMA1}
		\begin{tabular}{c ccc|ccc}
			\toprule
			\multirow{2}{*}{$T$} & \multicolumn{3}{c}{Indirect estimates} & 
			\multicolumn{3}{c}{BWE\footnotemark[1]} \\  \cline{2-7}
			& $\theta_0$ & $\theta_1$ & $\gamma$ & $\theta_0^{(W)}$ & $\theta_1^{(W)}$ & $\gamma^{(W)}$ \\ \hline
			\multirow{2}{*}{$500$} & 0.3561 & -0.5888 & 1.1989 & 0.3424 & -0.5427 & 18.7932 \\ 
			& (0.0298) & (0.0577) & (0.0600) & (0.1418) & (0.3084) & (38.4343) \\ 
			\multirow{2}{*}{$1000$} & 0.3545 & -0.5953 & 1.1986 & 0.3386 & -0.5532 & 47.5752 \\ 
			& (0.0186) & (0.0352) & (0.0412) & (0.0870) & (0.1955) & (232.0620) \\ 
			\multirow{2}{*}{$1500$} & 0.3536 & -0.5982 & 1.1986 & 0.3357 & -0.5555 & 49.4572 \\ 
			& (0.0131) & (0.0244) & (0.0331) & (0.0747) & (0.1690) & (178.3984) \\
			\bottomrule
		\end{tabular}
	\end{threeparttable}
\end{table}
\footnotetext[1]{In tvMA(1) simulations, the BWE did not converge in some cases. Therefore, excluding those cases, $R= 939, 978$ and $978$ replications are included for $T=500, 1000$, and $1500$, respectively.}

\begin{table}[!htbp]
	\centering
	\begin{threeparttable}
		\caption{Kurtosis and skewness of indirect estimates and BWE for different sample size $T$ assuming known $\alpha$ and $\beta$ from $\alpha$-stable tvMA(1) with $(\alpha,\beta,\theta_0,\theta_1,\gamma)=(1.1,-0.2,0.35,-0.6,1.2)$ based on $R=1000$ replications.}
		\label{tab:simulation_tvMA1_skew_kur}
		\begin{tabular}{c ccccccc}
			\toprule
			\multirow{2}{*}{$T$} & & \multicolumn{3}{c}{Indirect estimates} & 
			\multicolumn{3}{c}{BWE\footnotemark[1]} \\  \cline{3-8}
			& & $\theta_0$ & $\theta_1$ & $\gamma$ & $\theta_0^{(W)}$ & $\theta_1^{(W)}$ & $\gamma^{(W)}$ \\ \hline
			\multirow{2}{*}{$500$} & Kur & 7.9023 & 6.3460 & 2.9050 & 6.9952 & 7.3434 & 233.1652 \\ 
			& Skw & 1.2950 & 0.0800 & 0.2117 & 0.1961 & 1.1869 & 13.5324 \\ 
			\multirow{2}{*}{$1000$} & Kur &  9.8633 & 10.4926 & 2.8616 & 11.7454 & 8.6755 & 385.9194 \\ 
			& Skw & 1.6510 & 0.7121 & 0.0841 & 0.8633 & 1.5096 & 17.6374 \\ 
			\multirow{2}{*}{$1500$} & Kur &  8.1466 & 20.6873 & 2.8140 & 9.7011 & 10.5014 & 156.1843 \\ 
			& Skw & 1.5194 & 1.4762 & 0.1493 & -0.1837 & 1.9926 & 11.4943 \\
			\bottomrule
		\end{tabular}
	\end{threeparttable}
\end{table}

\begin{figure}[!htbp]
	\centering
	\includegraphics[width=\textwidth]{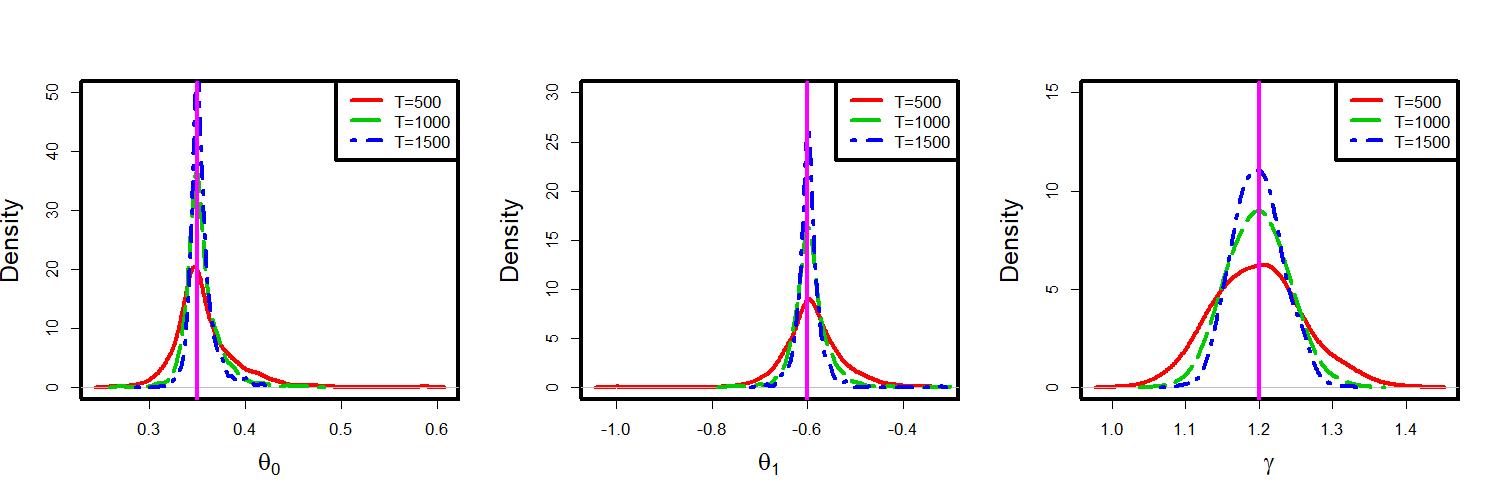}
	\caption{Density estimates of $\theta_0$, $\theta_1$ and $\gamma$ for different sample sizes based on $R=1000$ replications from $\alpha$-stable tvMA(1) with $(\alpha,\beta,\theta_0,\theta_1,\gamma)=(1.1,-0.2,0.35,-0.6,1.2)$ using indirect inference.} 
	\label{fig:tvMA_1_alpha1.1_density}
\end{figure}

\subsubsection{$\alpha$-stable tvARMA(1,1)}

The third simulation was carried out with the case of tvARMA(p,q) with $p=1$, $q=1$ and $\gamma\left( \frac{t}{T} \right)=\gamma$:

\begin{equation}\label{eq:tvARMA1}
X_{t,T}+\alpha_1 \left( \frac{t}{T} \right) X_{t-1,T}=\gamma \left\lbrace  \varepsilon_{t} + \beta_1 \left( \frac{t}{T} \right)  \varepsilon_{t-1} \right\rbrace,
\end{equation}
where $\varepsilon_{t} \sim S_\alpha \left( \nicefrac{1}{\sqrt{2}},\beta,0 \right)$ with $\alpha$ and $\beta$ known.

We suppose a linear parametric form of the time varying coefficients $\alpha_1 \left(  u \right) =\theta_{a0}+\theta_{a1} u$ and $\beta_1 \left(  u \right) =\theta_{b0}+\theta_{b1} u$. Therefore, the parameters of the IM is $\theta=\left( \theta_{a0},\theta_{a1}, \theta_{b0},\theta_{b1}, \gamma \right)$.

The simulation was done by assuming $\alpha=1.8$, $\beta=0.3$ and $\left( \theta_{a0},\theta_{a1}, \theta_{b0},\theta_{b1}, \gamma \right)= (-0.4,0.1,0.1,0.3,1.1)$. For BWE, $R= 989, 996$ and $994$ replications with converged estimates are included for $T=500, 1000$, and $1500$, respectively.

The MC mean, standard error, kurtosis and skewness of estimates from the tvARMA(1,1) simulation are reported in the Table \ref{tab:simulation_tvARMA1} and \ref{tab:simulation_tvARMA1_skew_kur} and the density estimates in Figure \ref{fig:tvARMA_1_alpha1.8_density}. In general, the distribution of indirect estimates has heavier tails, and the kurtosis and skewness are similar to the BWE (except for the parameter $\gamma$, indirect estimates behave better). However, in terms of standard error and MC mean, they behave much better than the BWE. Therefore, the indirect inference works well for tvARMA(1,1). 

\begin{table}[!htbp]
	
	\centering
		\caption{MC mean and standard error for different sample sizes $T$, using indirect estimators and BWE assuming  $\alpha$ and $\beta$ known, from $\alpha$-stable tvARMA(1,1) with $(\alpha,\beta,\theta_{a0},\theta_{a1},\theta_{b0},\theta_{b1},\gamma)=(1.8,0.3,-0.4,0.1,0.1,0.3,1)$, based on $R=1000$ replications.}
		\label{tab:simulation_tvARMA1}
		\scriptsize
		\begin{tabular}{c cccccccccc}
			\toprule
			\multirow{2}{*}{$T$} & \multicolumn{5}{c}{Indirect estimates} & 
			\multicolumn{5}{c}{BWE\footnotemark[2]} \\ \cline{2-11}
			& $\theta_{a0}$ & $\theta_{a1}$ & $\theta_{b0}$ & $\theta_{b1}$ & $\gamma$ & $\theta_{a0}^{(W)}$ & $\theta_{a1}^{(W)}$ & $\theta_{b0}^{(W)}$ & $\theta_{b1}^{(W)}$ & $\gamma^{(W)}$ \\ \hline
			\multirow{2}{*}{$500$}  & -0.4000 & 0.1061 & 0.0987 & 0.3097 & 0.9976 & -0.3917 & 0.1021 & 0.1078 & 0.3049 & 1.4151 \\ 
			& (0.1360) & (0.2222) & (0.1501) & (0.2395) & (0.0386) & (0.1952) & (0.3522) & (0.2130) & (0.3810) & (0.7603) \\ 
			\multirow{2}{*}{$1000$}  & -0.3921 & 0.0881 & 0.1064 & 0.2905 & 0.9982 & -0.3850 & 0.0815 & 0.1105 & 0.2880 & 1.4919 \\ 
			& (0.1001) & (0.1617) & (0.1053) & (0.1652) & (0.0290) & (0.1409) & (0.2535) & (0.1470) & (0.2599) & (0.5806) \\ 
			\multirow{2}{*}{$1500$}  & -0.3992 & 0.1021 & 0.0988 & 0.3060 & 0.9982 & -0.3939 & 0.0926 & 0.1055 & 0.2964 & 1.5538 \\ 
			& (0.0754) & (0.1269) & (0.0793) & (0.1285) & (0.0232) & (0.1085) & (0.1955) & (0.1155) & (0.2040) & (0.8009) \\ 
			\bottomrule
		\end{tabular}
\end{table}

\begin{table}[!htbp]

	\centering
		\caption{Kurtosis and skewness of indirect estimates and BWE for different sample size $T$, assuming known $\alpha$ and $\beta$ from $\alpha$-stable tvARMA(1,1) with $(\alpha,\beta,\theta_{a0},\theta_{a1},\theta_{b0},\theta_{b1},\gamma)=(1.8,0.3,-0.4,0.1,0.1,0.3,1)$, based on $R=1000$ replications.}
		\label{tab:simulation_tvARMA1_skew_kur}
			\scriptsize
		\begin{tabular}{c ccccccccccc}
			\toprule
			\multirow{2}{*}{$T$} & & \multicolumn{5}{c}{Indirect estimates} & \multicolumn{5}{c}{BWE\footnotemark[2]} \\ \cline{3-12}
			&  &  $\theta_{a0}$ & $\theta_{a1}$ & $\theta_{b0}$ & $\theta_{b1}$ & $\gamma$ & $\theta_{a0}^{(W)}$ & $\theta_{a1}^{(W)}$ & $\theta_{b0}^{(W)}$ & $\theta_{b1}^{(W)}$ & $\gamma^{(W)}$ \\ \hline
			\multirow{2}{*}{$500$} & Kur & 3.3650 & 3.1791 & 3.3657 & 3.4297 & 2.9935 & 2.9112 & 3.0692 & 3.2168 & 3.2822 & 203.2823 \\ 
			& Skw & 0.2754 & -0.2426 & -0.0746 & -0.1274 & 0.1839 & 0.1699 & -0.1329 & -0.2024 & -0.0422 & 12.0737 \\ 
			\multirow{2}{*}{$1500$} & Kur & 3.3964 & 3.5054 & 3.4470 & 3.4690 & 3.0327 & 3.4242 & 3.2166 & 3.2601 & 3.0064 & 41.6803 \\ 
			& Skw & 0.2002 & -0.1558 & 0.0100 & -0.1184 & 0.2460 & 0.3149 & -0.2253 & 0.0267 & -0.1713 & 4.9608 \\ 
			\multirow{2}{*}{$1500$} & Kur & 3.5817 & 3.1935 & 3.7052 & 3.3930 & 2.9790 & 2.9176 & 2.8801 & 3.3685 & 3.3091 & 96.2273 \\ 
			& Skw & 0.2895 & -0.1097 & 0.0137 & -0.0730 & 0.0718 & 0.0809 & 0.0268 & -0.1083 & 0.0372 & 7.7396 \\ 
			\bottomrule
		\end{tabular}
\end{table}

\footnotetext[2]{In tvARMA(1,1) simulations, the BWE did not converge in some cases. Therefore, excluding those cases, $R= 989, 996$ and $994$ replications are included for $T=500, 1000$, and $1500$, respectively.}

\begin{figure}[!htbp]
	\centering
	\includegraphics[width=\textwidth]{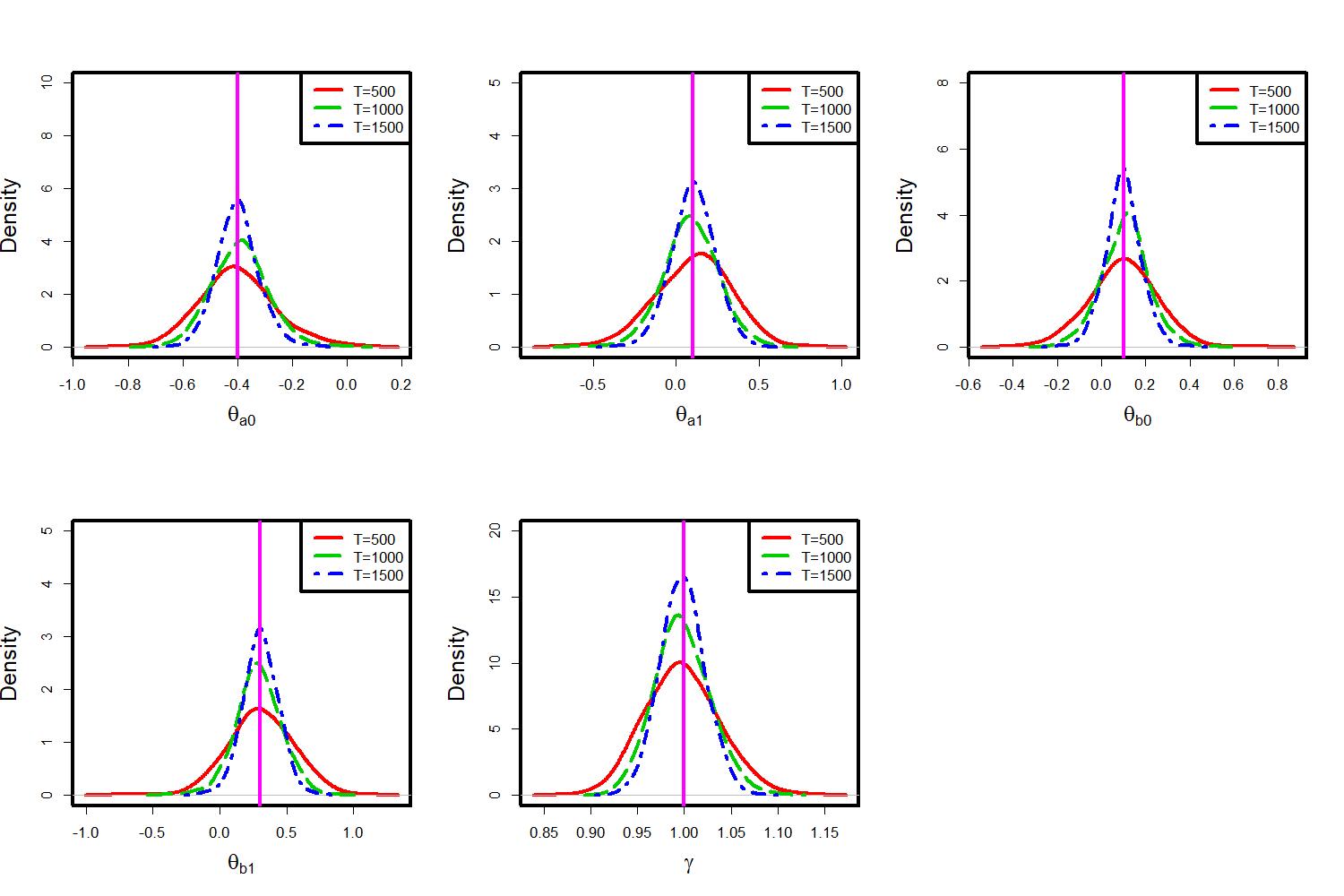}
	\caption{Density estimates of $\theta_{a0}$, $\theta_{a1}$, $\theta_{b0}$, $\theta_{b1}$ and $\gamma$ for different sample sizes based on $R=1000$ replications from $\alpha$-stable tvARMA(1,1) with $(\alpha,\beta,\theta_{a0}, \theta_{a1}, \theta_{b0}, \theta_{a1},\gamma)=(1.8,0.3,-0.4,0.1,0.1,0.3,1)$ using indirect inference.} 
	\label{fig:tvARMA_1_alpha1.8_density}
\end{figure}

\subsection{Unknown $\alpha$ case}
\label{sec:unknown_alpha}
\subsubsection{$\alpha$-stable tvAR(1)}

Consider the tvAR(1) model
\begin{equation}\label{eq:tvAR1_unknown}
X_{t,T}+\alpha_1 \left( \frac{t}{T} \right) X_{t-1,T} =\gamma \left( \frac{t}{T} \right) \varepsilon_{t},
\end{equation}
where $\varepsilon_{t} \sim S_\alpha \left( \nicefrac{1}{\sqrt{2}},\beta ,0 \right)$ with known $\beta$. Here, the indirect inference is employed to the tvAR(1) in \eqref{eq:tvAR1} with the linear parametric form of the time varying coefficient $\alpha_1 \left(  u \right) =\theta_0+\theta_1 u$, and $\gamma\left( u \right)=\gamma_0+ \gamma_1 u$. The parameters of IM is $\theta=\left( \theta_0,\theta_1, \alpha ,\gamma_0 ,\gamma_1 \right)$. For AM, the same parametric form with the t-distribution assuming unknown $\nu$ is used, that is, $\lambda=(\theta_0^{(A)},\theta_1^{(A)},\nu,\gamma_0^{(A)},\gamma_1^{(A)})$. 

The simulation was performed by assuming $(\alpha,\beta,\theta_0,\theta_1,\gamma_0,\gamma_1)=(1.4,0, \allowbreak 0.35,-0.6,0.5,0.1)$. Table \ref{tab:simulation_tvAR1_alphaunkown} reports the MC mean and standard error of the estimates. Notice that the MC mean from the indirect estimates seems to be consistent. Table \ref{tab:simulation_tvAR1_alphaunkown_skew_kur} presents the kurtosis and skewness of indirect estimates. All indirect estimates do not present kurtosis close to 3 and the skewness close to 0. Indeed, they are similar to the case when $\alpha$ is known.

\begin{table}[!htbp]
	
	\centering
		\caption{MC mean and standard error for different sample size $T$ using indirect estimators assuming $(\alpha,\beta,\theta_0,\theta_1,\gamma_0,\gamma_1)=(1.4,0,0.35,-0.6,0.5,0.1)$ with known $\beta$ from $\alpha$-stable tvAR(1) based on $R=1000$ replications.}
	\label{tab:simulation_tvAR1_alphaunkown}
	\scriptsize
	\begin{tabular}{c ccccc|ccccc}
		\toprule
		\multirow{2}{*}{T} & \multicolumn{10}{c}{Indirect estimates} \\ \cline{2-11}
		& \multicolumn{5}{c}{Model of Interest} & \multicolumn{5}{c}{Auxiliary model}  \\ \hline
		& $\theta_0$ & $\theta_1$ & $\alpha$ & $\gamma_0$ & $\gamma_1$ & $\theta_0^{(A)}$ & $\theta_1^{(A)}$ & $\nu$ & $\gamma_0^{(A)}$ & $\gamma_1^{(A)}$ \\ 
		\midrule
		\multirow{2}{*}{$500$} & 0.3482 & -0.5980 & 1.4083 & 0.4922 & 0.1111 & 0.3482 & -0.5980 & 1.8853 & 0.3994 & 0.0897 \\ 
		& (0.0406) & (0.0715) & (0.0737) & (0.0527) & (0.0960) & (0.0407) & (0.0716) & (0.2351) & (0.0446) & (0.0778) \\ 
		\multirow{2}{*}{$1000$} & 0.3492 & -0.5986 & 1.4037 & 0.4974 & 0.1033 & 0.3492 & -0.5986 & 1.8622 & 0.4033 & 0.0834 \\ 
		& (0.0244) & (0.0430) & (0.0520) & (0.0370) & (0.0661) & (0.0244) & (0.0429) & (0.1570) & (0.0311) & (0.0533) \\ 
		\multirow{2}{*}{$1500$} & 0.3498 & -0.5988 & 1.4000 & 0.4976 & 0.1011 & 0.3499 & -0.5988 & 1.8478 & 0.4030 & 0.0818 \\ 
		& (0.0187) & (0.0323) & (0.0417) & (0.0305) & (0.0546) & (0.0187) & (0.0323) & (0.1244) & (0.0255) & (0.0441) \\ 
		\bottomrule
	\end{tabular}
\end{table}

\begin{table}[!htbp]
	\centering
		\caption{Kurtosis and skewness of indirect estimates and BWE for different sample size $T$ assuming ($\alpha,\beta,\theta_0,\theta_1,\gamma_0,\gamma_1$)=($1.4,0,0.35,-0.6,0.5,0.1$) with known $\beta$ from $\alpha$-stable tvAR(1) based on $R=1000$ replications.}
	\label{tab:simulation_tvAR1_alphaunkown_skew_kur}
	\begin{tabular}{c c|ccccc}
		\toprule
		\multirow{2}{*}{T} &  & \multicolumn{5}{c}{Indirect estimates} \\ \cline{3-7}
		& & $\theta_0$ & $\theta_1$ & $\alpha$ & $\gamma_0$ & $\gamma_1$  \\ 
		\midrule
		\multirow{2}{*}{$500$} & kur & 3.7767 & 3.6800 & 3.1078 & 2.8924 & 3.0343 \\ 
		& skw & -0.1369 & 0.1213 & 0.2730 & 0.1583 & 0.0156 \\ 
		\multirow{2}{*}{$1000$} & kur & 4.7209 & 3.8513 & 2.7680 & 3.1397 & 3.0710 \\ 
		& skw & -0.1548 & 0.1008 & 0.0889 & 0.0672 & -0.0654 \\ 
		\multirow{2}{*}{$1500$} & kur & 4.2029 & 3.8664 & 2.7385 & 3.0881 & 3.0266 \\ 
		& skw & 0.1274 & -0.0192 & 0.0967 & 0.0973 & 0.0108 \\ 
		\bottomrule
	\end{tabular}
\end{table}

Finally, Figure \ref{fig:tvAR_1_unknown_alpha1.4_density} shows the density estimates of each parameter. The density estimates show that the standard error become smaller as $T$ increases. We conclude that the distribution of indirect estimates seem to be consistent for these sample path length.

\begin{figure}[!htbp]
	\centering
	\includegraphics[width=\textwidth]{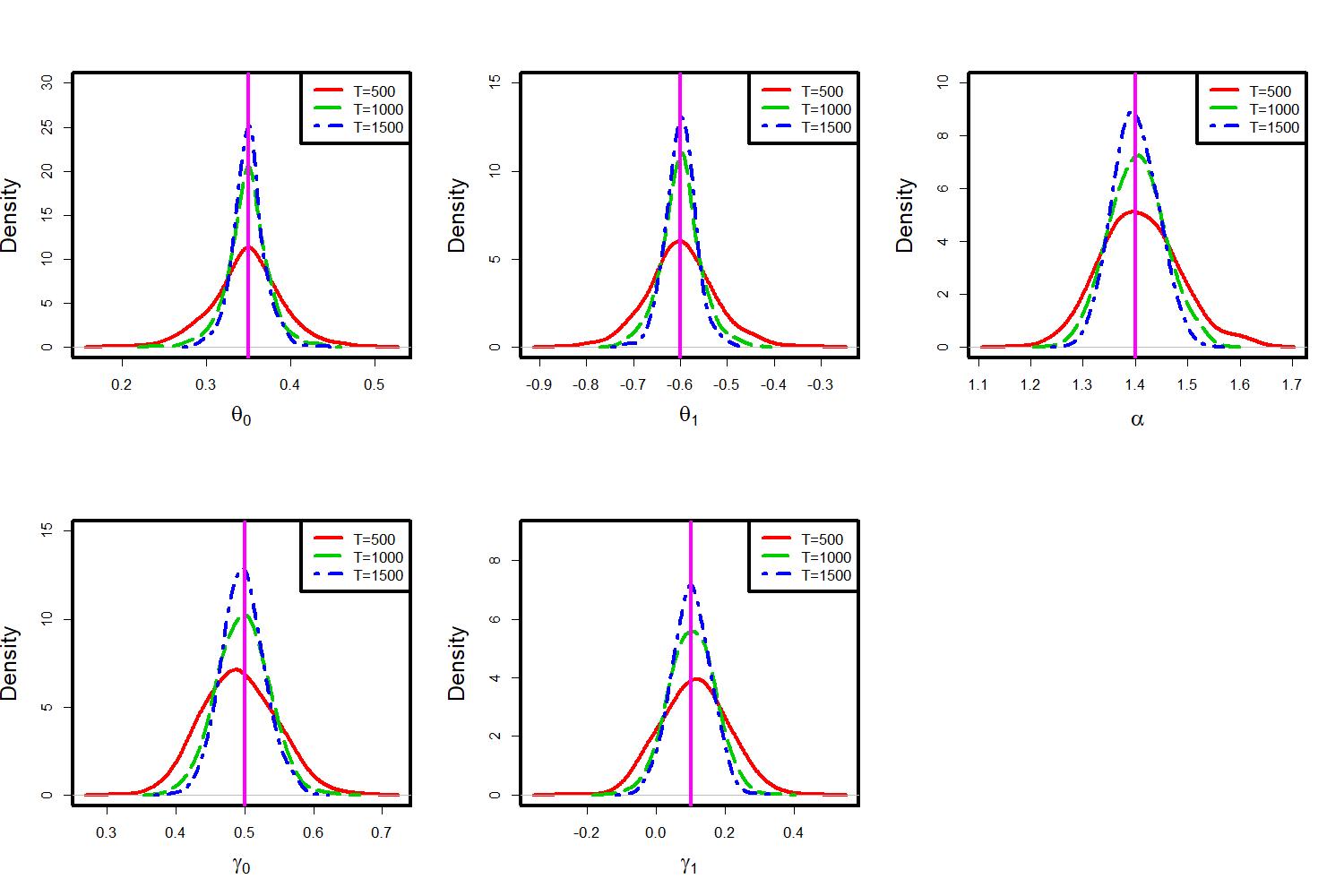}
	\caption{Density estimates of $\theta_0$, $\theta_1$, $\alpha$, $\gamma_0$ and $\gamma_1$ for different sample sizes based on $R=1000$ replications from $\alpha$-stable tvAR(1) with ($\alpha,\beta,\theta_0,\theta_1,\gamma_0,\gamma_1$)=($1.4,0,0.35,-0.6,0.5,0.1$) using indirect inference.} 
	\label{fig:tvAR_1_unknown_alpha1.4_density}
\end{figure}

\subsubsection{$\alpha$-stable tvMA(1)}

The indirect inference for the model \eqref{eq:tvMA1} with unknown $\alpha$ is illustrated. The parameter of IM is $\theta=(\theta_0,\theta_1,\alpha,\gamma)$ and the parameter of AM is $\lambda=(\theta_0^{(A)},\theta_1^{(A)},\nu,\gamma^{(A)})$. The simulation was performed by assuming $(\alpha,\beta,\theta_0,\theta_1,\gamma)=(1.75,0.2,-0.35, 0.4,0.7)$. 

The MC mean and standard error of the estimates from both model (IM and AM) are reported in Table \ref{tab:simulation_tvMA1_alphaunkown}, and kurtosis and skewness are presented in Table \ref{tab:simulation_tvMA1_alphaunkown_skew_kur}. Along with the density estimates showed in Figures \ref{fig:tvMA_1_unknown_alpha1.75_density}, the indirect estimates seem to be consistent with these sample path length. One interesting result is that while $\alpha<2$ implies the IM has infinite variance, the AM was estimated with $\nu>2$, i.e. finite variance.

\begin{table}[!htbp]
	\centering
	\caption{MC mean and standard error for different sample size $T$ using indirect estimators assuming $(\alpha,\beta,\theta_0,\theta_1,\gamma)=(1.75,0.2,-0.35,0.4,0.7)$ with known $\beta$ from $\alpha$-stable tvMA(1) based on $R=1000$ replications.}
	\label{tab:simulation_tvMA1_alphaunkown}
	\begin{tabular}{c cccc|cccc}
		\toprule
		\multirow{2}{*}{T} & \multicolumn{8}{c}{Indirect estimates} \\ \cline{2-9}
		& \multicolumn{4}{c}{Model of Interest} & \multicolumn{4}{c}{Auxiliary model}  \\ \hline
		& $\theta_0$ & $\theta_1$ & $\alpha$ & $\gamma$ & $\theta_0^{(A)}$ & $\theta_1^{(A)}$ & $\nu$ & $\gamma^{(A)}$ \\ 
		\midrule		
		\multirow{2}{*}{$500$}  & -0.3518 & 0.4016 & 1.7566 & 0.7008 & -0.3518 & 0.4016 & 3.9795 & 0.3810 \\ 
		& (0.0699) & (0.1245) & (0.0739) & (0.0296) & (0.0694) & (0.1237) & (1.0183) & (0.0390) \\ 
		\multirow{2}{*}{$1000$}  & -0.3487 & 0.3987 & 1.7527 & 0.6999 & -0.3486 & 0.3987 & 3.8307 & 0.3776 \\ 
		& (0.0446) & (0.0787) & (0.0559) & (0.0229) & (0.0445) & (0.0788) & (0.6414) & (0.0299) \\ 
		\multirow{2}{*}{$1500$}  & -0.3504 & 0.4009 & 1.7525 & 0.7003 & -0.3502 & 0.4007 & 3.7874 & 0.3785 \\ 
		& (0.0375) & (0.0663) & (0.0457) & (0.0187) & (0.0373) & (0.0661) & (0.4852) & (0.0242) \\ 		
		\bottomrule
	\end{tabular}
\end{table}

\begin{table}[!htbp]
	\centering
	\caption{Kurtosis and skewness of indirect estimates and BWE for different sample size $T$ assuming $(\alpha,\beta,\theta_0,\theta_1,\gamma)=(1.75,0.2,-0.35,0.4,0.7)$ with known $\beta$ from $\alpha$-stable tvMA(1) based on $R=1000$ replications.}
	\label{tab:simulation_tvMA1_alphaunkown_skew_kur}
	\begin{tabular}{c c|cccc}
		\toprule
		\multirow{2}{*}{T} &  & \multicolumn{4}{c}{Indirect estimates} \\ \cline{3-6}
		& & $\theta_0$ & $\theta_1$ & $\alpha$ & $\gamma_0$  \\ 
		\midrule
		\multirow{2}{*}{$500$} & kur & 3.8667 & 3.2718 & 2.8731 & 3.3445 \\ 
		& skw & -0.0413 & -0.0030 & -0.1140 & 0.0003 \\ 
		\multirow{2}{*}{$1000$} & kur & 3.7260 & 3.4565 & 2.8049 & 2.9412 \\ 
		& skw & 0.0436 & 0.0582 & -0.0081 & 0.1446 \\ 
		\multirow{2}{*}{$1500$} & kur & 3.6876 & 3.4211 & 3.0489 & 3.0002 \\ 
		& skw & -0.0043 & 0.0187 & -0.2133 & 0.0557 \\ 
		\bottomrule
	\end{tabular}
\end{table}

\begin{figure}[!htbp]
	\centering
	\includegraphics[width=\textwidth]{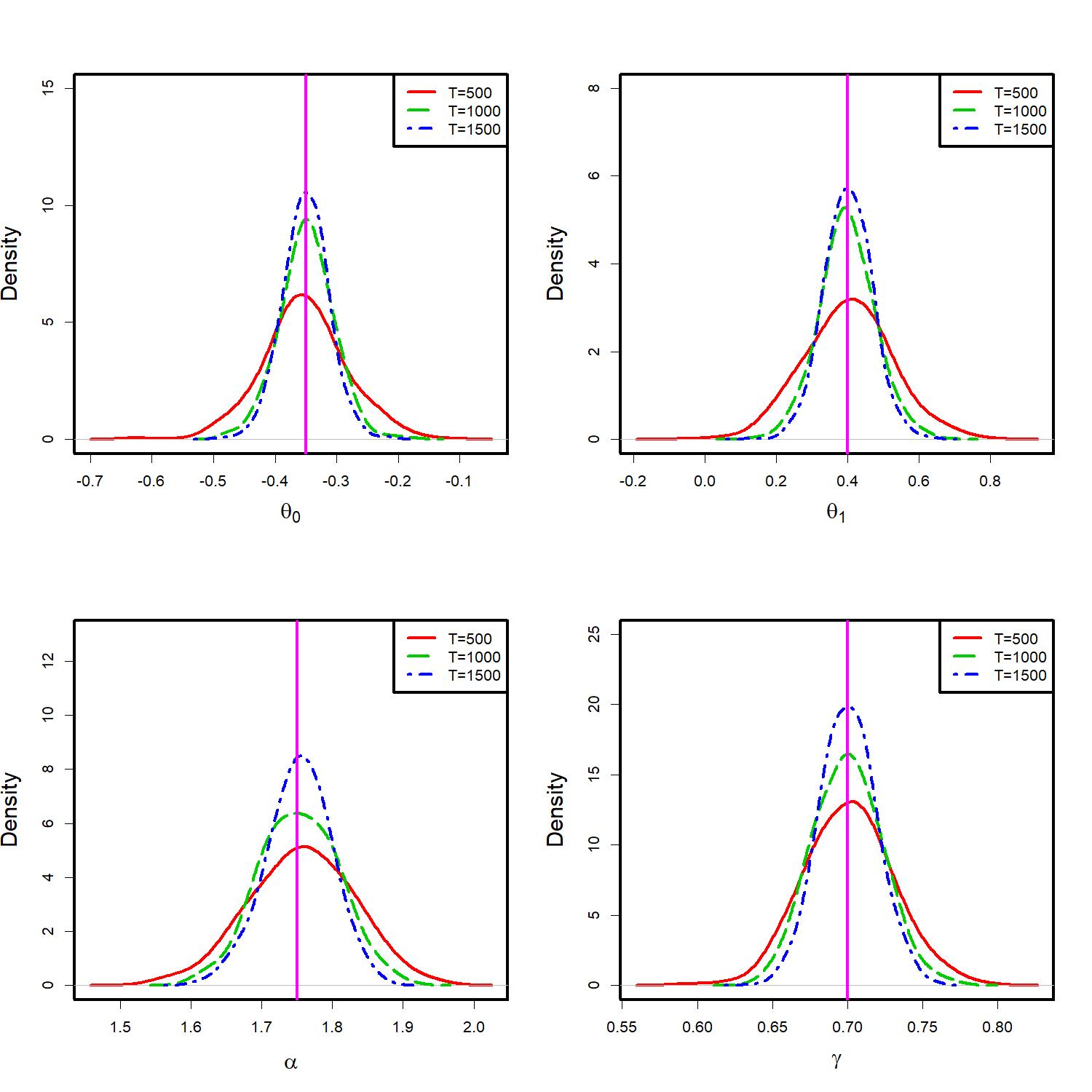}
	
	\caption{Density estimates of $\theta_0$, $\theta_1$, $\alpha$ and $\gamma$ for different sample sizes based on $R=1000$ replications from $\alpha$-stable tvMA(1) with $(\alpha,\beta,\theta_0,\theta_1,\gamma)=(1.75,0.2,-0.35,0.4,0.7)$ using indirect inference.} 
	\label{fig:tvMA_1_unknown_alpha1.75_density}
\end{figure}

\subsubsection{$\alpha$-stable tvARMA(1,1)}

Finally, the simulation was done for the case of tvARMA(1,1) in \eqref{eq:tvARMA1}, but $\alpha$ is assumed to be unknown. The time varying coefficients are assumed to be linear, i.e. $\alpha_1 \left(  u \right) =\theta_{a0}+\theta_{a1} u$ and $\beta_1 \left(  u \right) =\theta_{b0}+\theta_{b1} u$, and $\varepsilon_{t} \sim S_\alpha(\nicefrac{1}{\sqrt{2}},\beta,0)$ for known $\beta$. Therefore, the parameters of IM is $\theta=\left( \theta_{a0},\theta_{a1}, \theta_{b0},\theta_{b1}, \alpha, \gamma \right)$, while AM has the parameter $\lambda=\left( \theta_{a0}^{(A)},\theta_{a1}^{(A)}, \theta_{b0}^{(A)},\theta_{b1}^{(A)}, \nu, \gamma^{(A)} \right)$. The simulation was performed by assuming $(\alpha,\beta,\theta_{a0},\theta_{a1},\theta_{b0},\theta_{b1},\alpha,\gamma)=(1.3,0,-0.2,-0.4,0.2,0.3,1.1)$.

The MC mean and standard error of the estimates from both IM and AM are reported in Table \ref{tab:simulation_tvARMA1_alphaunkown}, and kurtosis and skewness are presented in Table \ref{tab:simulation_tvARMA1_alphaunkown_skew_kur}. The density estimates are showed in Figures \ref{fig:tvMA_1_unknown_alpha1.75_density}. Again, the indirect estimates seem to be consistent. Moreover, if we compare with simulation results from the known $\alpha$, they present similar standard error, kurtosis and asymmetry.

\begin{table}[!htbp]
	\centering
		\caption{MC mean and standard error for different sample size $T$ using indirect estimators assuming $(\alpha,\beta,\theta_{a0},\theta_{a1},\theta_{b0},\theta_{b1},\alpha,\gamma)=(1.3,0,-0.2,-0.4,0.2,0.3,1.1)$ with known $\beta$ from $\alpha$-stable tvARMA(1,1) based on $R=1000$ replications.}
	\label{tab:simulation_tvARMA1_alphaunkown}
	\begin{tabular}{c c cccccc}
		\toprule
		& T & $\theta_{a0}$ & $\theta_{a1}$ & $\theta_{b0}$ & $\theta_{b1}$ & $\alpha$ & $\gamma$ \\ \cline{2-8}
		\multirow{6}{*}{Model of Interest} & \multirow{2}{*}{$500$} & -0.2036 & -0.3932 & 0.1971 & 0.3064 & 1.3018 & 1.0923 \\ 
		&& (0.0585) & (0.0869) & (0.0587) & (0.0891) & (0.0698) & (0.0587) \\ 
		&\multirow{2}{*}{$1000$} & -0.2005 & -0.3986 & 0.2003 & 0.3004 & 1.3045 & 1.0976 \\ 
		&& (0.0319) & (0.0489) & (0.0329) & (0.0504) & (0.0471) & (0.0433) \\ 
		&\multirow{2}{*}{$1500$} & -0.1998 & -0.3999 & 0.2012 & 0.2983 & 1.2998 & 1.0953 \\ 
		&& (0.0233) & (0.0359) & (0.0250) & (0.0374) & (0.0390) & (0.0347) \\ 
		\midrule
		& T & $\theta_{a0}^{(A)}$ & $\theta_{a1}^{(A)}$ & $\theta_{b0}^{(A)}$ & $\theta_{b1}^{(A)}$ & $\nu$ & $\gamma^{(A)}$ \\ \cline{2-8}
		\multirow{6}{*}{Auxiliary model}  & \multirow{2}{*}{$500$} & -0.2036 & -0.3936 & 0.1971 & 0.3062 & 1.5904 & 0.7465 \\
		& & (0.0584) & (0.0864) & (0.0586) & (0.0889) & (0.1731) &( 0.0940) \\
		& \multirow{2}{*}{$1000$} & -0.2006 & -0.3986 & 0.2003 & 0.3006 & 1.5917 & 0.7542  \\
		& & (0.0319) & (0.0483) & (0.0329) & (0.0505) & (0.1160) & (0.0668) \\
		& \multirow{2}{*}{$1500$} & -0.1998 & -0.4009 & 0.2012 & 0.2983 & 1.5772 & 0.7487 \\
		& & (0.0232) & (0.0344) & (0.0250) & (0.0374) & (0.0947) & (0.0540) \\
		\bottomrule
	\end{tabular}
\end{table}

\begin{table}[!htbp]
	\centering
		\caption{Kurtosis and skewness of indirect estimates and BWE for different sample sizes ($T=500,1000,1500$) assuming $(\alpha,\beta,\theta_{a0},\theta_{a1},\theta_{b0},\theta_{b1},\alpha,\gamma)=(1.3,0,-0.2,-0.4,0.2,0.3,1.1)$ with known $\beta$ from $\alpha$-stable tvARMA(1,1) based on $R=1000$ replications.}
	\label{tab:simulation_tvARMA1_alphaunkown_skew_kur}
	\begin{tabular}{c c|cccccc}
		\toprule
		\multirow{2}{*}{T} &  & \multicolumn{5}{c}{Indirect estimates} \\ \cline{3-8}
		& & $\theta_{a0}$ & $\theta_{a1}$ & $\theta_{b0}$ & $\theta_{b1}$ & $\alpha$ & $\gamma$ \\ 
		\midrule
		\multirow{2}{*}{$500$} & kur & 5.2893 & 4.5345 & 6.0807 & 5.5860 & 3.0705 & 3.3815 \\ 
		& skw & 0.2593 & -0.1945 & -0.1951 & 0.1297 & 0.1406 & 0.2709 \\ 
		\multirow{2}{*}{$1000$} & kur & 4.6288 & 4.1073 & 4.5984 & 4.1306 & 3.4796 & 2.9077 \\ 
		& skw & -0.1406 & 0.1144 & 0.0360 & -0.0328 & 0.1167 & -0.0713 \\ 
		\multirow{2}{*}{$1500$} & kur & 4.9301 & 4.0790 & 5.1471 & 4.5091 & 3.1301 & 3.0701 \\ 
		& skw & 0.0236 & -0.1964 & 0.0964 & -0.2378 & 0.1004 & 0.0833 \\ 
		\bottomrule
	\end{tabular}
\end{table}

\begin{figure}[!htbp]
	\centering
	\includegraphics[width=\textwidth]{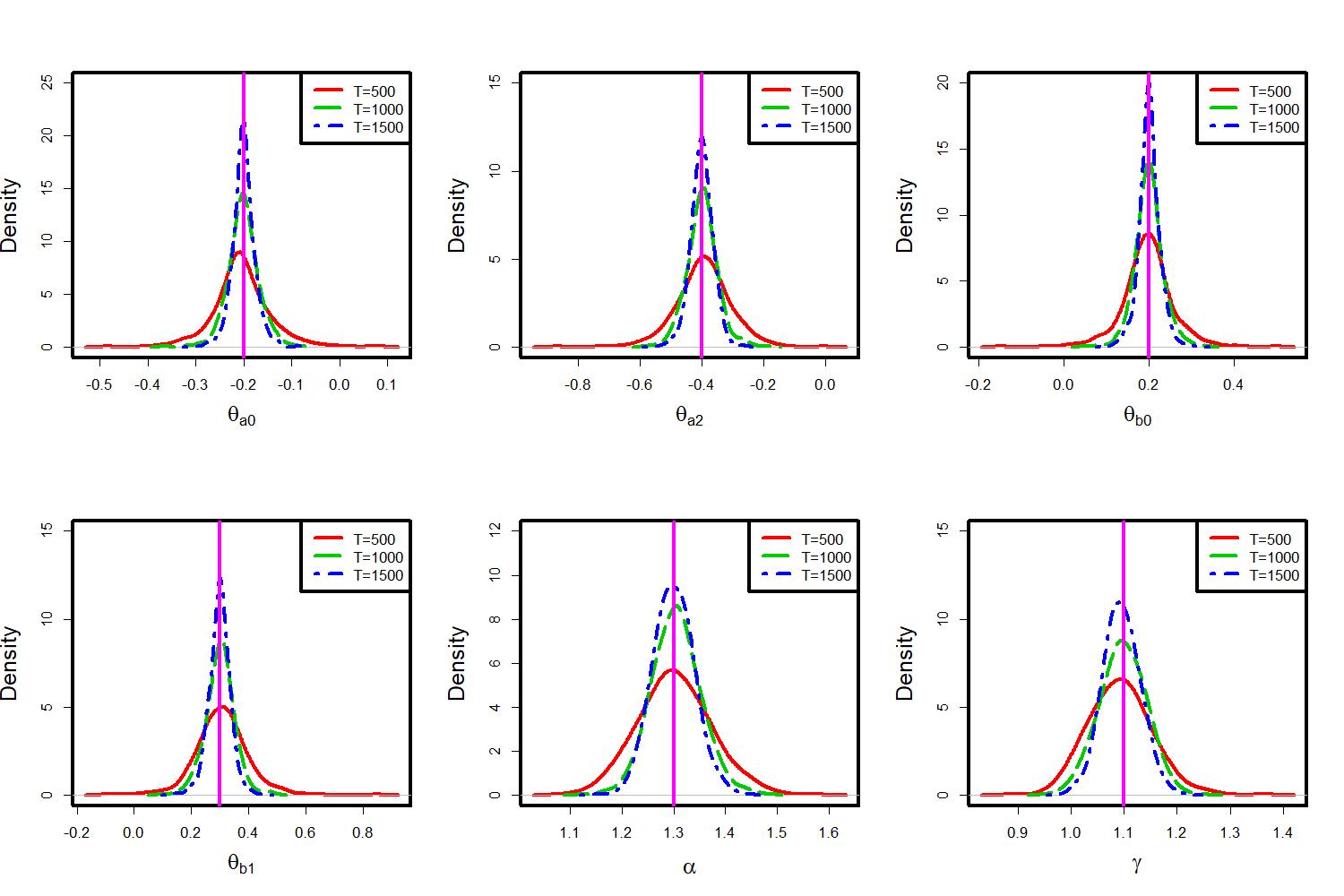}
	
	\caption{Density estimates of $\theta_{a0}$, $\theta_{a1}$, $\theta_{b0}$, $\theta_{b1}$, $\alpha$ and $\gamma$ for different sample sizes based on $R=1000$ replications from $\alpha$-stable tvARMA(1,1) with $(\alpha,\beta,\theta_{a0},\theta_{a1},\theta_{b0},\theta_{b1},\gamma)=(1.3,0,-0.2,-0.4,0.2,0.3,1.1)$ using indirect inference.} 
	\label{fig:tvARMA_1_unknown_alpha1.3_density}
\end{figure}

\section{Application}
\label{sec:application}

In this section, we illustrate an application for wind power generated in German offshore wind farms from 16/06/2015 at 00:00 to 27/07/2015 at 24:00 ($T=1008$ hours), obtained from the EMHIRES (European Meteorological High resolution RES time series) datasets \cite{Gonzales2016}. For daily data, the Gaussian innovation assumption seems to be appropriate, but the hourly time series present heavy tails and Gaussian assumption is inadequate. Figure \ref{fig:wind}, panel (a) shows the original time series ($y_t$) and its difference ($\Delta y_t$), while panel (b) shows the standardized histogram of the differenced time series, which shows heavy-tailed behavior. We select just a small segment of the data because the whole time series has more complex structure, such as seasonality, thus a non-parametric approach could be more appropriate.

\begin{figure}[!htbp]
	\centering
	\subfloat[Hourly wind power ($y_t$) and its difference ($\Delta y_t$)]{\includegraphics[width=.45\textwidth]{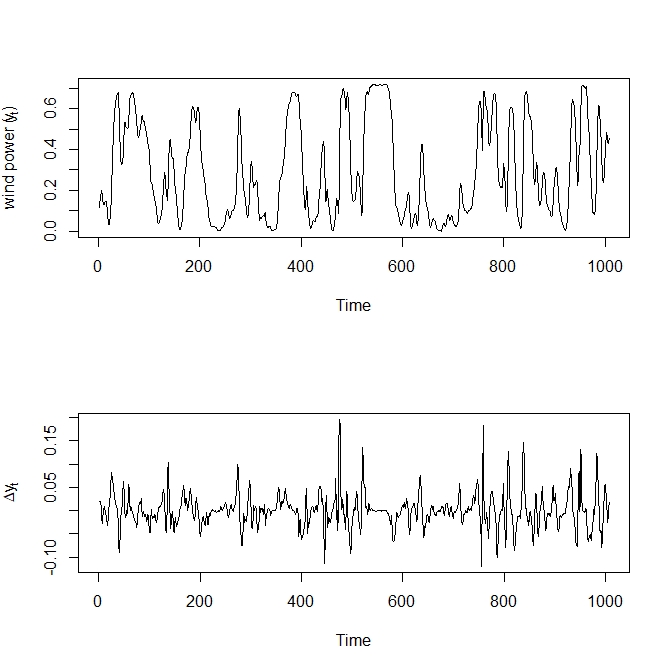}}
	\subfloat[Standardized histogram of $\Delta y_t$.]{\includegraphics[width=.45\textwidth]{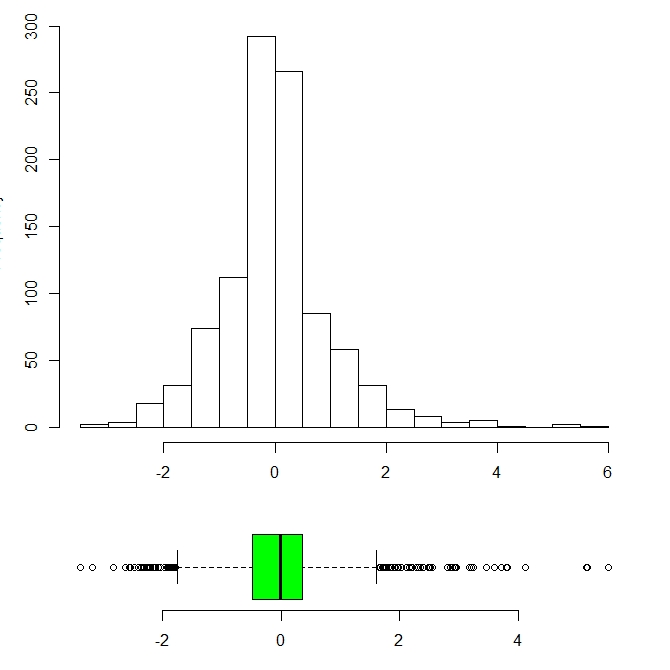}}
	\caption{Hourly wind power from 16/06/2015 at 00:00 to 27/07/2015 at 24:00.} 
	\label{fig:wind}
\end{figure}

Figure \ref{fig:wind_acf} shows sample autocorrelation function (global), and partial autocorrelation function. Traditional models, like ARMA(1,1) and AR(4) seem to be appropriate, but the blocked smooth periodogram shows its slowly changed structure over the time. 

\begin{figure}[!htbp]
	\centering
	\subfloat[ACF and partial ACF.]{\includegraphics[width=.45\textwidth]{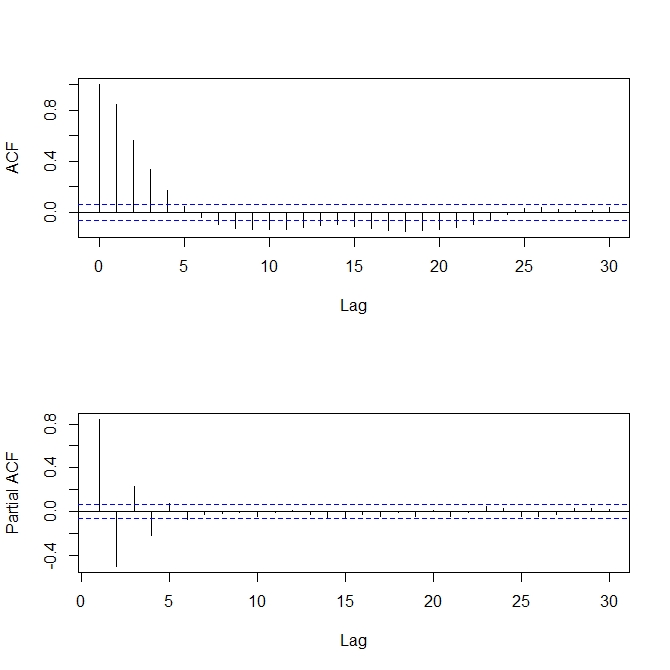}}
	\subfloat[Blocked smooth periodogram]{\includegraphics[width=.45\textwidth]{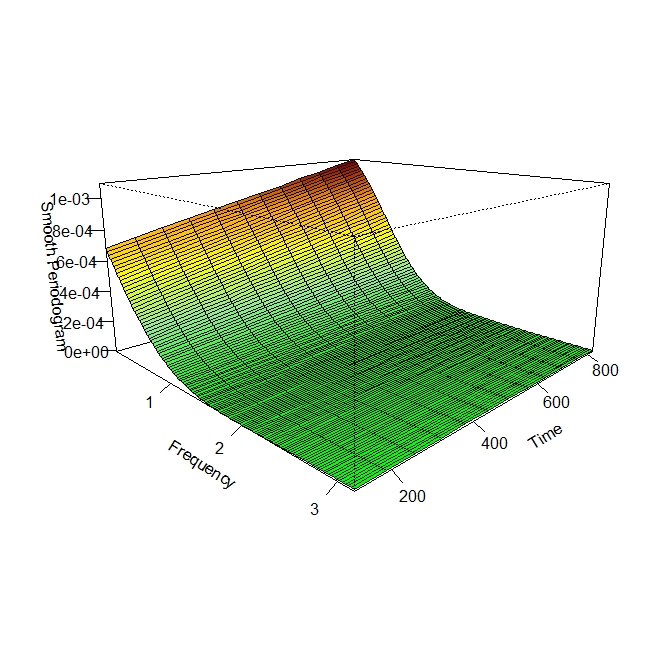}}
	\caption{ACF, partial ACF and blocked smooth periodogram of wind power data.} 
	\label{fig:wind_acf}
\end{figure}

To explore its local structure, we estimate ARMA(1,1) and AR(4) for 9 time blocks. Figures \ref{fig:wind_arma_local} and \ref{fig:wind_ar4_local} present the smoothed estimated coefficients over time for both models. Both cases show that coefficients are approximately linear over time. Consequently, two models are proposed: 
\begin{itemize}
	\item tvARMA(1,1) model with linear coefficients, $\alpha_1(u)=\theta_{a0}+\theta_{a1} (u)$, $\beta_1(u)=\theta_{b0}+\theta_{b1} (u)$ and $\gamma(u)=\gamma_0+\gamma_1 (u)$.
	\item tvAR(4) model with linear coefficients, $\alpha_1(u)=\theta_{a0}+\theta_{a1} (u)$, $\alpha_2(u)=\theta_{b0}+\theta_{b1} (u)$, $\alpha_3(u)=\theta_{c0}+\theta_{c1} (u)$, $\alpha_4(u)=\theta_{d0}+\theta_{d1} (u)$ and $\gamma(u)=\gamma_0+\gamma_1 (u)$.
\end{itemize}  

\begin{figure}[!htbp]
	\centering
	\includegraphics[width=\textwidth]{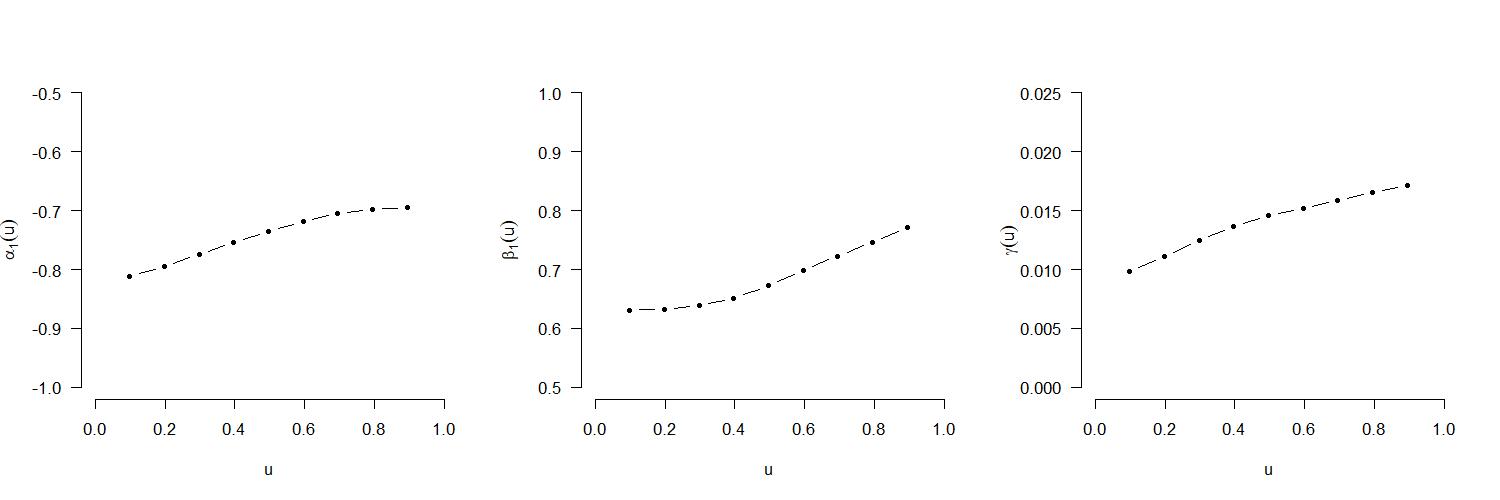}
	\caption{(Smoothed) $\alpha(u)$, $\beta(u)$ and $\gamma(u)$ estimates of stationary ARMA(1,1) model for $9$ block of size $N=200$ with $u=t/T$ center point of each block.} 
	\label{fig:wind_arma_local}
\end{figure}

\begin{figure}[!htbp]
	\centering
	\includegraphics[width=\textwidth]{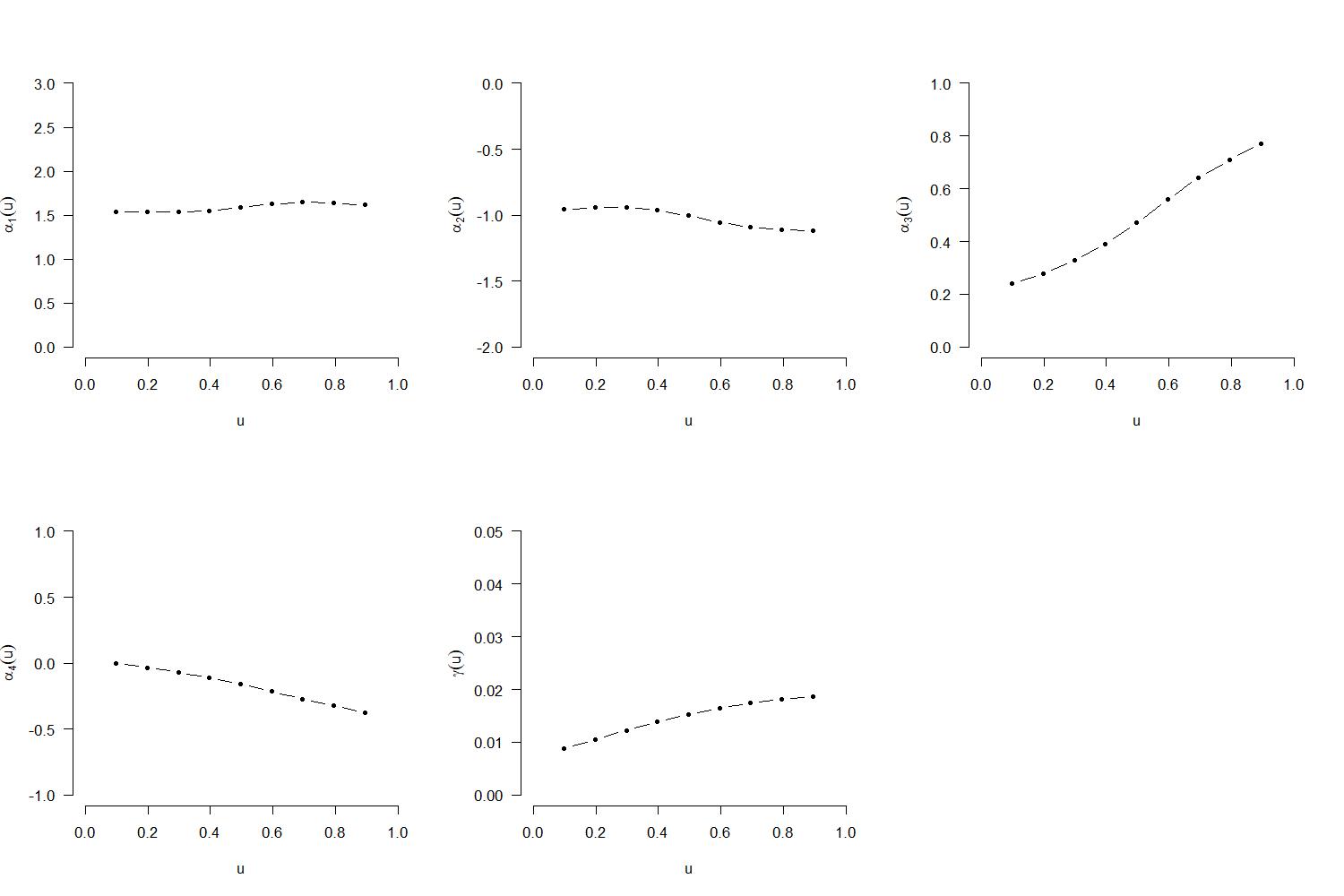}
	\caption{(Smoothed) $\alpha_1(u)$, $\alpha_2(u)$, $\alpha_3(u)$, $\alpha_4(u)$ and $\gamma(u)$ estimates of stationary AR(4) model for $9$ block of size $N=200$ with $u=t/T$ center point of each block.} 
	\label{fig:wind_ar4_local}
\end{figure}


After estimating both models, the residuals of tvARMA(1,1) are correlated, and we focus only on the tvAR(4). The parameter estimates are reported in Table \ref{tab:wind_blocked_whittle}. Figure \ref{fig:wind_ar4_residual} presents the residual analysis and the QQ-plot, box-plot and the histogram show that the distribution of error has heavy tail
and the residuals are approximately white noise. Additionally, we estimated the skewness ($0.35$) and kurtosis ($13.84$) and carried out Shapiro-Wilk and Jarque-Bera tests, which rejected the null hypothesis of normality. Moreover, Figure \ref{fig:wind_variogram} presents the variogram of the first difference of the wind data and the residuals from the tvAR(4) model. It is clear to observe that both of the variograms diverge.

\begin{table}[ht]
	\centering
		\caption{BWE of tvAR(4) from wind power time series.}
	\label{tab:wind_blocked_whittle}
	\begin{tabular}{c|cccc}
		\toprule
		\multirow{2}{*}{Parameter} & \multicolumn{4}{c}{BWE} \\
		\cline{2-5}
		& Estimate & s.e. & z-value & p-value \\ 
		\hline		
		$\theta_{a0}$ & -1.5985 & 0.0768 & -20.8171 & 0.0000 \\ 
		$\theta_{a1}$ & 0.3305 & 0.1406 & 2.3508 & 0.0187 \\ 
		$\theta_{b0}$ & 0.9135 & 0.1373 & 6.6536 & 0.0000 \\ 
		$\theta_{b1}$ & 0.0207 & 0.2447 & 0.0847 & 0.9325 \\ 
		$\theta_{c0}$ & -0.0585 & 0.1372 & -0.4266 & 0.6697 \\ 
		$\theta_{c1}$ & -0.7153 & 0.2445 & -2.9254 & 0.0034 \\ 
		$\theta_{d0}$ & -0.1316 & 0.0767 & -1.7158 & 0.0862 \\ 
		$\theta_{d1}$ & 0.5454 & 0.1405 & 3.8831 & 0.0001 \\ 
		$\gamma_0$ & 0.0077 & 0.0003 & 24.5452 & 0.0000 \\ 
		$\gamma_1$ & 0.0152 & 0.0007 & 21.6400 & 0.0000 \\ 
		\bottomrule
	\end{tabular}
\end{table}

\begin{figure}[!htbp]
	\centering
	\subfloat[QQ-plot]{\includegraphics[width=.4\textwidth]{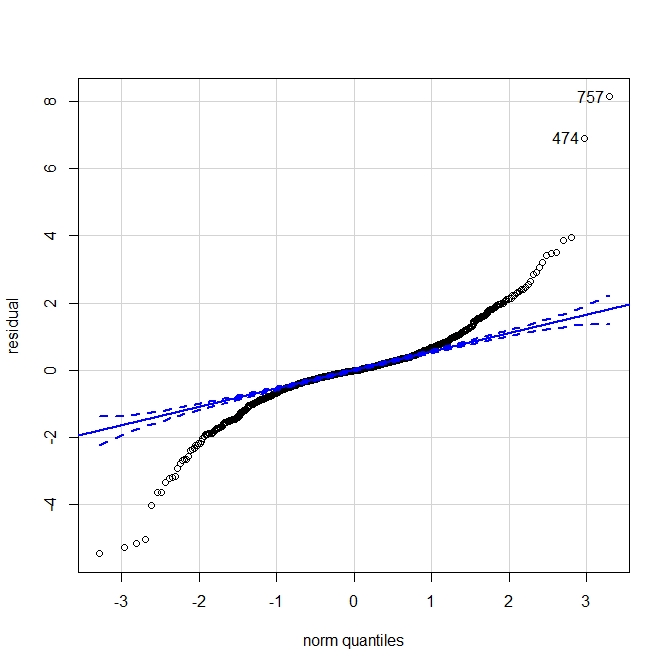}}
	\subfloat[Autocorreltion function]{\includegraphics[width=.4\textwidth]{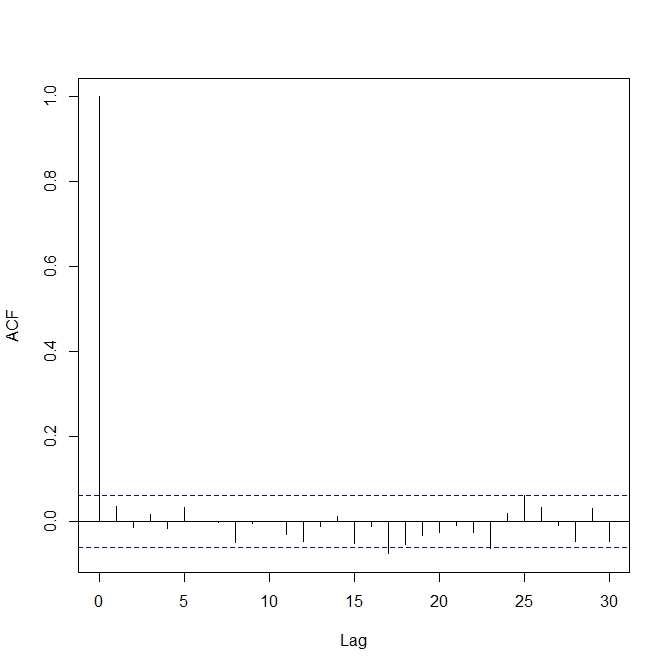}}
	
	\subfloat[Box-plot]{\includegraphics[width=.4\textwidth]{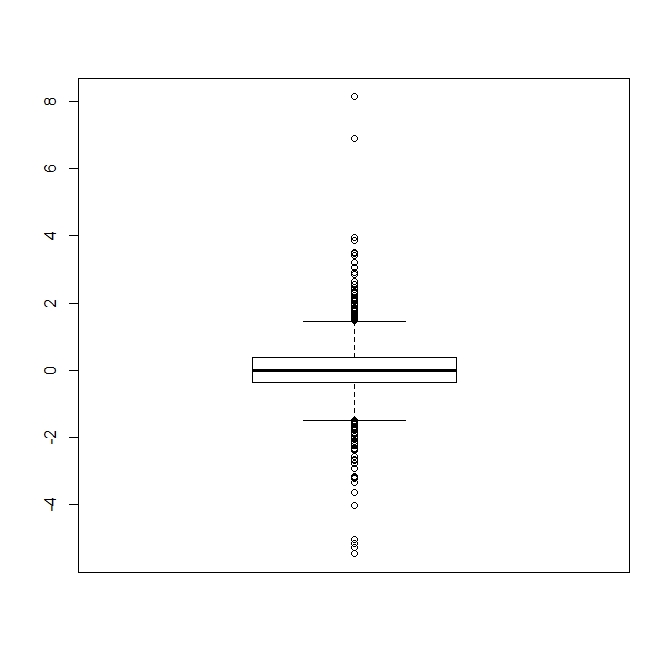}}
	\subfloat[Histogram with estimated stable curve ($\alpha=1.34$, $\beta=0$) and Gaussian curve.]{\includegraphics[width=.4\textwidth]{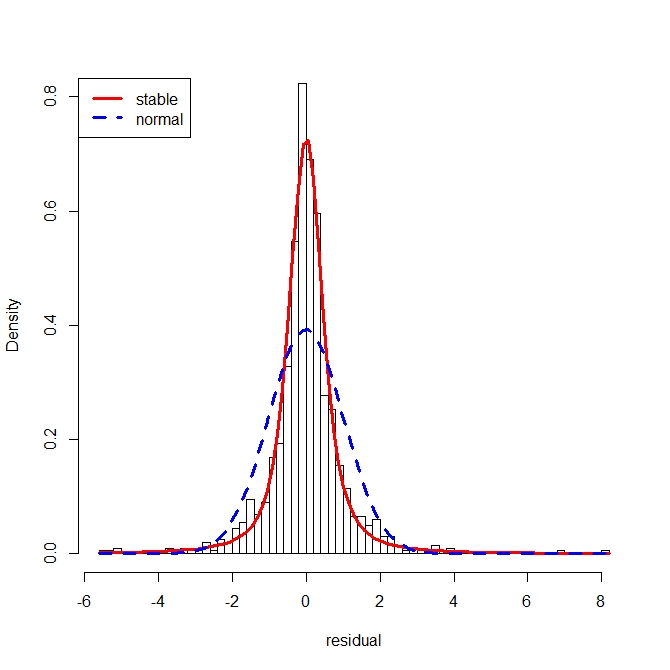}}
	\caption{Residual analysis using the BWE (standadized residual).} 
	\label{fig:wind_ar4_residual}
\end{figure}

\begin{figure}[!htbp]
	\centering
	\subfloat[$\Delta y_t$]{\includegraphics[width=.32\textwidth]{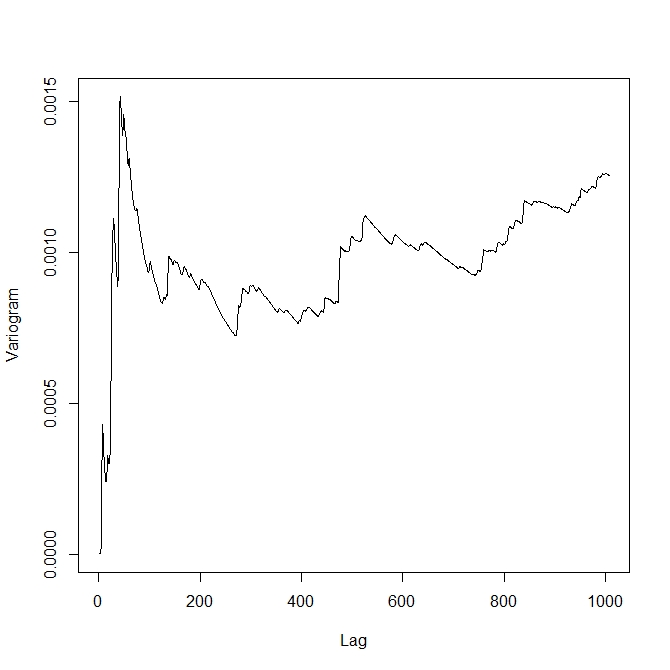}}
	\subfloat[Residuals]{\includegraphics[width=.32\textwidth]{fig13a.jpeg}}
	\caption{Variogram of the first differenced wind data $\Delta y_t$ and the residuals of the tvAR(4) model.} 
	\label{fig:wind_variogram}
\end{figure}

Since the residuals present heavy tail, we propose a more flexible model, $\alpha-$stable tvAR(4). We performed indirect estimation assuming known and unknown $\alpha$. In the first case, we assume $\alpha=1.34$ and $\beta=0$, which are obtained by MLE from the residuals of the BWE. Since estimation results are similar to the estimated model by assuming unknown $\alpha$, we present only results of the second model here.

The vector of parameters of IM is $(\theta_{a0},\theta_{a1},\theta_{b0},\theta_{b1},\theta_{c0},\theta_{c1},\theta_{d0},\theta_{d1}, \alpha, \gamma_{0},\gamma_{1})$ and the indirect inference was done by assuming symmetric $\alpha$-stable innovations. Table \ref{tab:wind_indirect_estimation_unknown_alpha} reports indirect estimates assuming $\beta=0$ with their MC standard error with $R=1000$ replications.

\begin{table}[!htbp]
	\centering
		\caption{Indirect estimates of $\alpha-$stable tvAR(4) with $S=40$ from wind data.}
	\label{tab:wind_indirect_estimation_unknown_alpha}
	\begin{tabular}{c c c}
		\toprule
		Parameter & Indirect estimate & Standard error \\ \hline
		$\theta_{a0}$ & -1.5434 & 0.0251 \\ 
		$\theta_{a1}$ & -0.0316 & 0.0426 \\ 
		$\theta_{b0}$ & 0.9036 & 0.0442 \\ 
		$\theta_{b1}$ & 0.1083 & 0.0764 \\ 
		$\theta_{c0}$ & -0.2818 & 0.0437 \\ 
		$\theta_{c1}$ & -0.2235 & 0.0752 \\ 
		$\theta_{d0}$ & 0.0639 & 0.0246 \\ 
		$\theta_{d1}$ & 0.1496 & 0.0412 \\
		$\alpha$ & 1.3875 & 0.0528 \\  
		$\gamma_{0}$ & 0.0065 & 0.0005 \\ 
		$\gamma_{1}$ & 0.0033 & 0.0010 \\ 
		\bottomrule
	\end{tabular}
\end{table}

To evaluate the residual distribution with the stable distribution, \cite{Nolan2002} suggested using the stabilized probability plot (stabilized p-p plot), proposed by \cite{Michael1983}, instead of the QQ-plot because the last one is not suitable to evaluate heavy-tailed distribution. In QQ-plot, large fluctuation for the extreme values in case of the heavy-tailed distribution produce large standard errors in the tails. Let $y_1 \leq \cdots \leq y_n$ be an ordered random sample of size $n$ from the distribution $F$. The stabilized p-p plot is defined as the plot of $s_i=\left( \frac{2}{\pi}\right)  \arcsin (F^{\frac{1}{2}}(y_i))$ versus $r_i=\left( \frac{2}{\pi}\right)  \arcsin \left( \left[ \left( i- \frac{1}{2} \right)/n  \right] ^{\frac{1}{2}}\right) $.  In this way, the histogram and the stabilized p-p plot in figure \ref{fig:wind_arma_ind_inf_residual} show that the stable distribution fits well the residuals.

\begin{figure}[!htbp]
	\centering
	\subfloat[Histogram with the stable curve]{\includegraphics[width=.4\textwidth]{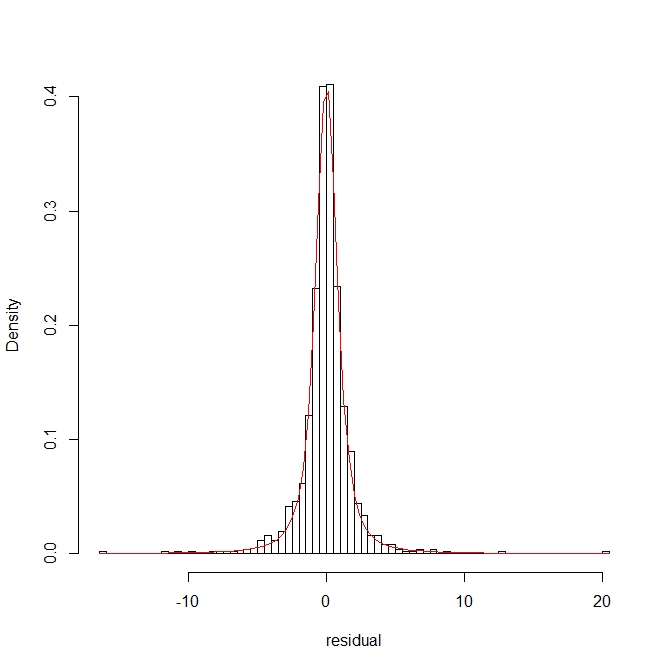}}
	\subfloat[Stabilized p-p plot]{\includegraphics[width=.4\textwidth]{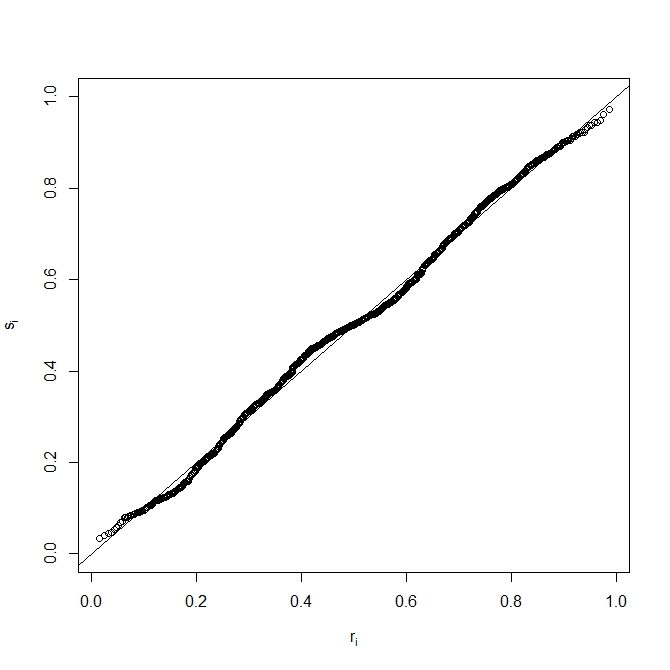}}\\
	\caption{Residual analysis from the $\alpha-$stable tvAR(4) model assuming that the innovation distribution is stable with $\alpha=1.34$ and $\beta=0$.} 
	\label{fig:wind_arma_ind_inf_residual}
\end{figure}

Finally, we compare the Mean square error (MSE), Root mean square error (RMSE) and Mean absolute error (MAE) of tvARMA(1,1) and tvAR(4) using BWE and indirect estimates. Note that MSE and RMSE do not make sense theoretically if we assume $\alpha-$stable tvAR(4). In Table \ref{tab:wind_indirect_error}, we observe that using BWE (assuming finite variance), MSE and RMSE are slightly lower, while using the indirect inference presents lower MAE.

Since the residual analysis indicates heavy-tails, $\alpha-$stable tvAR(4) is a better model to describe the data. In this case, by assuming $\alpha=1.34$, which is far from $2$, the simulation done in the previous section shows that the BWE is not appropriate. Even though MSE and RMSE are lower for BWE, they are not appropriate for $\alpha-$stable process since they cannot be theoretically handled. Based on MAE, the indirect inference performs slightly better. Moreover, the interpretation of estimated coefficients of the model also changed, i.e. $\alpha_1(u)$ and $\alpha_2(u)$ are constant, while $\alpha_3(u)$, $\alpha_4(u)$ and $\gamma(u)$ vary linearly. 

\begin{table}[!htbp]
	\centering
	\caption{Goodness of fit of different models for the wind data.}
	\label{tab:wind_indirect_error}
	\begin{tabular}{l c c c c}
		\toprule
		Model & MSE & RMSE & MAE \\ 
		\midrule
		tvARMA(1,1) & 0.000248 & 0.015739 & 0.009675 \\
		$\alpha$-stable tvARMA(1,1) & 0.000257 & 0.016028 & 0.009469  \\
		tvAR(4) & 0.000242 & 0.015542 & 0.009468 \\
		$\alpha$-stable tvAR(4) & 0.000256 & 0.015993 & 0.009094 \\
		\bottomrule
	\end{tabular}
\end{table}

\section{Conclusion}
\label{sec:conclusion}
In this paper, we studied $\alpha-$stable locally stationary ARMA processes and presented their properties. In contrast to the locally stationary processes with finite variance, this process involves the infinite variance observed in different fields. We also proposed an indirect inference method for the process with parametric time-varying coefficients. We performed simulations for basic models with linear parametric coefficients for known and unknown $\alpha$. The results show that indirect inference appropriate. An application is also illustrated.

There are some limitations that still need to be solved in the future. Firstly, since the time-varying spectral representation does not exist, identifying the local structure using traditional methods (autocorrelation and partial autocorrelation) are an informal way to identify the time-varying structure. One possibility is the local version of the dependence measure called autocovariation \citep{Kokoszka1994}. Secondly, simulations should be done for more complex models and also consider the possibility of non-parametric models. Thirdly, the indirect inference is time-consuming but they are appropriate when heavy-tailed innovations are present. Simulations suggest that when $\alpha$ is close to $2$, Gaussian innovations can be assumed. Model selection is still an open question. Also, there is few work related to prediction.

Finally, we are involved in research about the locally stationary process with tempered stable innovations, which are similar to stable distribution in its center but their tails are lighter and moments of all orders are finite.


\section*{Funding}

The authors are grateful to the support of a CNPq grant (141607/2017-3) and the University of Costa Rica (SWC), and a Fapesp grant 2018/04654-9 (PAM).

\bibliographystyle{tfnlm}

\end{document}